\DeclareMathOperator{\Ima}{Im}
\newtheorem{remark}{Remark}[section]
\def\L{{\mathcal T}}
\def\TT{{\bf T}}
\def\VV{{\bf V}}
\def\rank{{\rm rank}}
\title{Mean exit time for surface-mediated diffusion: spectral analysis and asymptotic behavior}
\author{O. B\'enichou\footnotemark[1] 
\and D.~S. Grebenkov\footnotemark[2]
\and L. Hillairet\footnotemark[3]
\and L. Phun\footnotemark[3]
\and R. Voituriez\footnotemark[1]
\and M. Zinsmeister\footnotemark[3] \ \footnotemark[4]}
\begin{document}

\maketitle

\renewcommand{\thefootnote}{\fnsymbol{footnote}}
\footnotetext[1]{Laboratoire de Physique Th\'eorique de la Mati\`ere Condens\'ee (UMR 7600), CNRS / UPMC, 4 Place Jussieu, 75255 Paris Cedex}
\footnotetext[2]{Laboratoire de Physique de la Mati\`ere Condens\'ee (UMR 7643) CNRS -- Ecole Polytechnique, 91128 Palaiseau, France}
\footnotetext[3]{MAPMO (UMR 6628), Universit\'e d'Orl\'eans, 45067 Orl\'eans, France}
\footnotetext[4]{Corresponding author: zins@univ-orleans.fr}
\renewcommand{\thefootnote}{\arabic{footnote}}

\date{Received: \today / Revised version: }

\begin{abstract}
We consider a model of surface-mediated diffusion with alternating
phases of pure bulk and surface diffusion.  For this process, we compute
the mean exit time from a disk through a hole on the circle.  We
develop a spectral approach to this escape problem in which the mean
exit time is explicitly expressed through the eigenvalues of the
related self-adjoint operator.  This representation is particularly
well suited to investigate the asymptotic behavior of the mean exit
time in the limit of large desorption rate $\lambda$.  For a
point-like target, we show that the mean exit time diverges as
$\sqrt{\lambda}$.  For extended targets, we establish the asymptotic
approach to a finite limit.  In both cases, the mean exit time is
shown to asymptotically increase as $\lambda$ tends to infinity.  We
also revise the optimality regime of surface-mediated diffusion.
Although the presentation is limited to the unit disk, the spectral
approach can be extended to other domains such as rectangles or
spheres.
\end{abstract}

\section{Introduction}

Many transport and search processes exhibit intermittent character
when different modes of motion are alternated.  Typical examples are
animals foraging (with phases of rapid relocation and slow
exploration), facilitated search mechanism on DNA (with phases of pure bulk
diffusion and chain sliding), vesicle transportation in living cells
(with phases of active transport by motor proteins and passive
diffusion in the cytoplasm), water transport in confining media (with
phases of pure bulk diffusion and surface exploration)
\cite{Benichou11,Bressloff13}.  The intermittence is often expected to
facilitate transport and search processes, e.g., by reducing the mean
search time necessary to reach a target (food, specific DNA sequence,
nucleus, or reaction zone in the above examples).  In particular, the
mean exit time from a bounded domain through an opening (a target) on
the boundary has been actively studied during the last decade
\cite{Redner,Metzler}.  For pure bulk diffusion, Singer {\it et al.}
derived the asymptotic behavior of the mean exit time in the narrow
escape limit (when the size of the target is small)
\cite{Singer06a,Singer06b,Singer06c,Schuss07,Reingruber09}.  Isaacson
and Newby developed uniform in time asymptotic expansions in the
target radius of the first passage time density for the diffusing
molecule to find the target \cite{Isaacson13}.  The escape problem for
an intermittent process with phases of surface and pure bulk diffusion (the
so-called surface-mediated diffusion) has been recently solved for
rotation-invariant domains \cite{Benichou10,Benichou11a}.  The known
eigenbases for the Laplace operators governing pure bulk and surface
diffusions allowed one to express the mean exit time in a closed
matrix form.  Under well-defined conditions, the mean exit time was
shown to be minimized at an optimal desorption rate that characterizes
switching from surface to pure bulk diffusion.  These results have been
extended in various directions
\cite{Rupprecht12a,Rupprecht12b,Calandre12,Calandre14}.  An
alternative master equation approach for discrete (on-lattice)
surface-mediate diffusion (also called the bulk-mediated surface
diffusion) has been proposed
\cite{Revelli05,Rojo11a,Rojo11b,Rojo13}.

In the present paper, we propose a rigorous spectral analysis of the
above escape problem.  We focus on surface-mediated diffusion in the
unit disk and derive a spectral representation of the mean exit time.
This representation is well suited to investigate the asymptotic
behavior of the mean exit time in the limit of large desorption rate
$\lambda$.  For a point-like target, we show that the mean exit time
diverges as $\sqrt{\lambda}$.  For extended targets, we establish the
asymptotic approach to a finite limit.  In both cases, the mean exit
time is shown to asymptotically increase as $\lambda$ tends to
infinity.  We also revise the optimality regime of surface-mediated
diffusion.  Although the presentation is limited to the unit disk, the
spectral approach can be extended to other domains such as rectangles
or spheres.

\section{A self-adjoint operator formulation}

We study the following model of surface-mediated diffusion in the unit
disk $\mathbb{D} = \{ z\in \mathbb{C}~: |z|<1\}$ whose boundary
$\partial \mathbb{D}$ includes an exit (or a target) of angular size
$2\epsilon$ (i.e., an arc of the unit circle between $\pi-\epsilon$
and $\pi+\epsilon$), with $0\leq \epsilon \leq \pi$.  A starting point
$e^{i\theta}$ is taken on the unit circle.  If the starting point is
located on the target then the process is immediately stopped.
Otherwise, the particle moves along the circle according to a Brownian
motion with the diffusion coefficient $D_1$ for a duration of
$\min\{\tau_\lambda,\tau\}$, where $\tau_\lambda$ is a random variable
with exponential law of parameter $\lambda \geq 0$, and $\tau$ is the
first hitting time of the target.  If $\tau \leq \tau_\lambda$ then
the process stops.  If $\tau > \tau_\lambda$ then the particle is
relocated at time $\tau_\lambda$ along the normal inside the disk at a
distance $0 < a \leq 1$ to start there a 2D Brownian motion with the
diffusion coefficient $D_2$.  This motion is stopped after hitting
back the unit circle, and the same procedure is restarted from this
last hitting point.  We define $t_1(\theta)$ as being the expected
time to reach the target.  Similarly, for $0 \le r < 1$, we define
$t_2(re^{i\theta})$ as being the expected time to reach the target
starting from the point $re^{i\theta}$ inside the unit disk.

It has been shown in \cite{Benichou11a} that these two functions
satisfy the following system of equations:
\begin{empheq}[left=\empheqlbrace]{align}
&D_1 t_1'' (\theta) + \lambda [t_2\left((1-a)e^{i\theta}\right)-t_1(\theta)]=-1 \label{eq:basic_eq_1} \\
&D_2\Delta t_2=-1    \label{eq:basic_eq_2}\\
&t_2(e^{i\theta})=t_1(\theta) \text{ } (\theta \in [-\pi,\pi]) \label{eq:boundary_condition_1}\\
&t_1(\theta)=0 \text{ if } \theta \in [-\pi,-\pi+\epsilon] \cup [\pi-\epsilon, \pi] \label{eq:boundary_condition_2}
\end{empheq}
Let us notice that, by symmetry, $t_1(\theta)$ is an even function so
it is sufficient to determine it on $[0,\pi]$.

The solution to Eq. \eqref{eq:basic_eq_2} is the sum of the particular
solution $\frac{1-r^2}{4D_2}$ to the inhomogeneous (Poisson) equation
$\Delta u= -\frac{1}{D_2}, ~ u|_{\partial \mathbb{D}}=0$, and the
solution to the Dirichlet problem $\Delta v=0$, $v|_{\partial
\mathbb{D}}=t_1$.

Since $t_1$ is even it may be represented as a cosine series
\begin{equation*}
t_1(\theta)=\sum_{n \ge 0}{a_n\cos n\theta},
\end{equation*}
from which
\begin{align*}
t_2(re^{i\theta})=\frac{1-r^2}{4D_2}+\sum_{n\geq 0}{a_n r^n\cos n\theta}.
\end{align*}
Eq. \eqref{eq:basic_eq_1} then becomes
\begin{align}
\label{eq:t1_diff}
t_1''(\theta) = -\frac{1}{D_1}\left(1 + \lambda\frac{1-(1-a)^2}{4D_2}\right) + \frac{\lambda}{D_1} U(t_1) ,
\end{align}
where $U$ is the operator on $L^2\left( [0,\pi] \right)$ defined by
\begin{align*}
U\left( \sum_{n \ge 0}{x_n \cos n \theta} \right)=\sum_{n \ge 1}{x_n \left(1-(1-a)^n\right)\cos n\theta}.
\end{align*}
This operator can also be written as $U=V^2$, where
\begin{align}\label{eq:operator_V_A}
V\left( \sum_{n \ge 0}{x_n \cos n \theta} \right)=\sum_{n \ge 1}{x_n \sqrt{1-(1-a)^n}\cos n\theta}.
\end{align}

Next we introduce the Sturm Liouville operator $T$ defined on
$L^2\left([0,\pi-\epsilon]\right)$ as $Tf=u$, where
\begin{empheq}[left=\empheqlbrace]{align}\label{eq:eq_T_tilde}
& u''=f\\
& u'(0)=u(\pi-\epsilon)=0.
\end{empheq}
The operator $T$ is negative self-adjoint. Finally, we define
$\tilde{T} = -ETR$ as an operator on $L^2\left( [0,\pi] \right)$,
where $R: L^2\left([0,\pi]\right) \longrightarrow
L^2\left([0,\pi-\epsilon]\right)$ is the natural restriction, and
$E:L^2\left([0,\pi-\epsilon]\right) \longrightarrow
L^2\left([0,\pi]\right)$ is the natural extension by $0$.  The
operator $\tilde{T}$ can be written explicitly as
\begin{equation}
\tilde{T}(f) = \begin{cases} \int\limits_{\theta}^{\pi -\epsilon} d\theta_1 \int\limits_0^{\theta_1} d\theta_2 ~ f(\theta_2) , 
\quad 0\leq \theta < \pi -\epsilon ,\cr
0 ,\hskip 31mm   \pi - \epsilon \leq \theta \leq \pi .\end{cases}
\end{equation}
One can easily check that the eigenbasis of this operator is
\begin{equation}
\nu_n = \frac{(1-\epsilon/\pi)^2}{(n+1/2)^2},  \qquad
u_n = \begin{cases}
 \sqrt{\frac{2}{\pi-\epsilon}} \cos \left(\frac{(n+1/2)\theta}{1-\epsilon/\pi}\right), \quad 0 \le \theta \le \pi - \epsilon, \cr
 0, \hskip 35mm \pi-\epsilon \le \theta \le \pi.   \end{cases}
\end{equation}
These eigenvectors form an orthogonal basis of $L^2[0,\pi-\epsilon]$
by Sturm-Liouville theory.

Let us check that $\tilde{T}$ is a self-adjoint non-negative operator
on $L^2\left([0,\pi]\right)$.  Let $f$, $g$ be in
$L^2\left([0,\pi]\right)$:
\begin{align*}
- \langle \tilde{T} f, \overline{g} \rangle & = \langle E T R f, \overline{g} \rangle = \int_{0}^{\pi-\epsilon} T(Rf) ~\overline{g}  
=\int_{0}^{\pi-\epsilon} T(Rf)~ \overline{Rg}\\
&=\int_{0}^{\pi-\epsilon} Rf~ \overline{T(Rg)}
=\int_{0}^{\pi}f~\overline{E T R(g)} 
= - \langle f, \overline{\tilde{T}g} \rangle ,
\end{align*}
which proves the claim, since the operator $ETR$ is negative.

The operator $\tilde{T}$ allows us to translate
Eq. \eqref{eq:t1_diff} into
\begin{align}\label{eq:t1}
t_1=\frac{1}{D_1}\left(1+\lambda\frac{1-(1-a)^2}{4D_2}\right)\tilde{T}(1)-\frac{\lambda}{D_1}\tilde{T}U(t_1).
\end{align}
We next apply the operator $V$ to both sides of Eq. \eqref{eq:t1} to
get, writing $s_1=V(t_1)$,
\begin{align*}
s_1=\frac{1}{D_1}\left(1+\lambda\frac{1-(1-a)^2}{4D_2}\right)V\tilde{T}(1)-\frac{\lambda}{D_1}V\tilde{T}V(s_1) ,
\end{align*}
which can be solved in $s_1$ as
\begin{align}\label{eq:s1}
s_1 = \frac{1}{D_1}\left(1+ \lambda \frac{1-(1-a)^2}{4D_2} \right)\left(I+\frac{\lambda}{D_1}V\tilde{T}V\right)^{-1}(\psi), 
\end{align}
where $\psi=V\tilde{T}(1)$.  This is an exact solution of the original
problem for a fixed starting point.  We emphasize that the operators
$V$ and $\tilde{T}$, as well as the function $\psi = V\tilde{T}(1)$,
are given explicitly.  At first thought, this representation looks
similar to the mean exit time found in \cite{Benichou11a} (see also
\cite{Rupprecht12a,Rupprecht12b}).  Although both derivations are
conceptually similar, the major advantage of the present approach is
the use of the self-adjoint operator $V\tilde{T}V$.  This feature
allows one to invert the operator $(I + \frac{\lambda}{D_1}
V\tilde{T}V)$ in Eq. \eqref{eq:<t1>0} and to express the mean exit
time in a spectral form (see below).

The case of a randomly distributed starting point on the circle with
uniform law is of particular interest.  This is equivalent to
averaging the mean exit time over the starting points that we denote
as
\begin{align*}
\langle t_1 \rangle = \frac{1}{\pi} \int_0^\pi d\theta ~ t_1(\theta) = \frac{1}{\pi} \langle t_1,1 \rangle.
\end{align*}
Using Eqs. \eqref{eq:t1} and \eqref{eq:s1}, we can write
\begin{align}\label{eq:<t1>}
\pi \langle t_1 \rangle &=\frac{1}{D_1}\left(1+\lambda\frac{1-(1-a)^2}{4D_2}\right)\langle \tilde{T}(1),1 \rangle -\frac{\lambda}{D_1} 
\langle \tilde{T}V(s_1),1 \rangle \nonumber \\
&=\frac{1}{D_1}\left(1+\lambda\frac{1-(1-a)^2}{4D_2}\right)\langle \tilde{T}(1),1 \rangle -\frac{\lambda}{D_1} \langle s_1,\psi \rangle , 
\end{align}
from which it follows that the knowledge of $s_1$ allows to compute
$\langle t_1 \rangle$:
\begin{align}\label{eq:<t1>0}
\langle t_1 \rangle &=\frac{1}{\pi D_1}\left(1+\lambda\frac{1-(1-a)^2}{4D_2}\right)\left(\langle \tilde{T}(1),1 \rangle - 
\frac{\lambda}{D_1} \langle \left(I+\frac{\lambda}{D_1}V\tilde{T}V\right)^{-1} \psi , \psi \rangle \right).
\end{align}

By spectral theorem there exists an orthonormal basis of $L^2 \left(
[0,\pi] \right)$ which diagonalizes the self-adjoint operator
$V\tilde{T}V$.  More precisely, $L^2 \left( [0,\pi] \right)$ is the
orthogonal direct sum of $\ker (V\tilde{T}V)$ and $\Ima (V\tilde{T}V)$
and we obtain this orthonormal basis by completing any orthonormal
basis of $\ker (V\tilde{T}V)$ with the basis formed by the normalized
eigenvectors associated with positive eigenvalues.

To identify these two spaces let us notice first that $\ker V$ is the
one dimensional space of constant functions. Thus $\ker (V\tilde{T}V)$
is the space of functions $f \in L^2 \left( [0,\pi] \right)$ such that
$\tilde{T}(Vf)$ is constant. But since $Vf=\tilde{T}(Vf)''$ on $[0,
\pi-\epsilon]$, $Vf \equiv 0$ on $[0, \pi-\epsilon]$.  So $f \in \ker
(V\tilde{T}V) \Rightarrow \supp(Vf) \subset [\pi-\epsilon,\pi]$, and
this implication is easily seen to be an equivalence. With a slight
abuse of language, we write $\ker (V\tilde{T}V)=V^{-1}\left(L^2 \left(
\pi-\epsilon,\pi \right)\right)$.  It follows that $\Ima
(V\tilde{T}V)=V\left( L^2 \left( [0,\pi-\epsilon] \right) \right)$.

We call $\left( e_n \right)_{n \ge 0}$ the orthonormal basis of $\Ima
(V\tilde{T}V)$ such that $V\tilde{T}V e_n = \lambda_n e_n$ and
$\lambda_n \downarrow 0$ as $n \to \infty$. 

When $\epsilon = 0$, the eigenbasis $e_n$ is simply formed by cosine
functions, and the analysis is straightforward (see below).  When
$\epsilon > 0$, we first observe that $\psi =
V\tilde{T}(1)=V\tilde{T}(\varphi_0)$ where
\begin{align*}
\varphi_0 =
	\begin{cases}
	1 &\text{ on } [0,\pi-\epsilon[ , \\
	-\frac{\pi-\epsilon}{\epsilon} &\text{ on } [\pi-\epsilon,\pi] .
	\end{cases}
\end{align*}
so that $\int_{0}^{\pi}{\varphi_0}=0$, and thus there exists $\psi_0
\in L^2 \left( [0,\pi] \right)$ such that $\varphi_0=V\psi_0$.
It follows that $\psi \in \Ima(V\tilde{T}V)$ and we can write
\begin{align*}
\psi=\sum_{n \ge 1}{\psi_n e_n} ,
\end{align*}
with coefficients $\psi_n$ forming a sequence in $\ell^2$.  Using this
representation, Eq. \eqref{eq:s1} is formally solved as
\begin{align*}
s_1=\frac{1}{D_1}\left(1+\lambda\frac{1-(1-a)^2}{4D_2}\right) \sum_{n\geq 1}{\frac{\psi_n}{1+\frac{\lambda}{D_1}\lambda_n}~e_n} .
\end{align*}
Plugging this expression into Eq. \eqref{eq:<t1>0} we obtain
\begin{align}\label{eq:<t1>2}
\langle t_1 \rangle = \frac{1}{\pi D_1}\left(1+\lambda\frac{1-(1-a)^2}{4D_2}\right)\left[ \langle \tilde{T}(1),1 \rangle 
-\frac{\lambda}{D_1} \sum_{n\geq 0}{\frac{\psi_n^2}{1+\frac{\lambda}{D_1}\lambda_n}} \right] .
\end{align}
This spectral representation is particularly well suited for the
asymptotic analysis of the mean exit time.

\subsection{Point-like target ($\epsilon = 0$)}

We first consider the case of $\epsilon = 0$.  Although such a target
is not accessible for 2D pure bulk diffusion, it can still be reached
through 1D surface diffusion.  In this case, one easily gets
\begin{align}
\label{eq:Tcostheta}
\begin{cases}
\tilde{T}(\cos n\theta) = -T(\cos n\theta)=\frac{\cos n\theta -(-1)^n}{n^2} \quad (n\ge 1),\\
\tilde{T}(1) = -T(1) = \frac{\pi^2-\theta^2}{2} ,
\end{cases}
\end{align}
so that
\begin{align}\label{eq:formula_of_VTV}
\begin{cases}
V \tilde{T} V(\cos n\theta)=\frac{1-(1-a)^n}{n^2}\cos n\theta  \quad (n \ge 1), \\
V \tilde{T} V(1)=0 .
\end{cases}
\end{align}
One concludes that 
\begin{equation}
\label{eq:eigen_eps0}
\lambda_n = \begin{cases} \frac{1-(1-a)^n}{n^2}  \quad (n \geq 1), \cr   0 \hskip 15mm (n=0), \end{cases}  \qquad
e_n = \begin{cases}       \sqrt{2/\pi} \cos n\theta  \quad (n\geq 1), \cr  \sqrt{1/\pi} \hskip 14mm (n = 0).  \end{cases}
\end{equation}
For $n \geq 1$, we have
\begin{align}
\psi_n & = \langle \psi, e_n \rangle = \langle V\tilde{T}(1), e_n \rangle = \sqrt{2/\pi}~ \langle \tilde{T}(1),V(\cos n\theta) \rangle \nonumber \\
\label{eq:psi_n}
&= \sqrt{2/\pi} \sqrt{1-(1-a)^n} ~ \langle \frac{\pi^2-\theta^2}{2} , \cos n\theta \rangle 
= \sqrt{2\pi} \sqrt{1-(1-a)^n}~ \frac{(-1)^{n+1}}{n^2} ,  
\end{align}
while $\langle \psi, 1 \rangle = \langle V\tilde{T}(1), 1 \rangle =
\langle \tilde{T}(1), V1 \rangle = 0$.  Substituting this expression
into Eq. \eqref{eq:<t1>2}, we get
\begin{align}
\langle t_1 \rangle_{\epsilon=0} 
&= \frac{1}{\pi D_1}\left( 1+\lambda\frac{1-(1-a)^2}{4D_2} \right) \left( \langle \tilde{T}(1),1 \rangle 
- \sum_{n \ge 1}{\frac{2\pi \frac{\lambda}{D_1}\left( 1-(1-a)^n \right) }{n^2\left( n^2+\frac{\lambda}{D_1}\left( 1-(1-a)^n \right) \right)}} \right) \nonumber\\
&= \frac{1}{\pi D_1}\left( 1+\lambda\frac{1-(1-a)^2}{4D_2} \right)\Biggl( \langle \tilde{T}(1),1 \rangle 
- 2\pi \sum_{n \ge 1}{\frac{1}{n^2}} \nonumber \\
& + 2\pi \sum_{n \ge 1}{\frac{1}{n^2+\frac{\lambda}{D_1}\left(1-(1-a)^n\right)}} \Biggr) . \label{eq:<t1>,epsilon=0}
\end{align}
From Eq. \eqref{eq:Tcostheta}, we get
\begin{align}\label{eq:<t(1),1>}
\langle \tilde{T}(1),1 \rangle_{\epsilon=0} = \int\limits_0^\pi d\theta~ \frac{\pi^2-\theta^2}{2} = \frac{1}{3}\pi^3 .
\end{align}
We also know the value of the Riemann zeta function
\begin{align}\label{eq:zeta_function}
\zeta(2)=\sum_{n \ge 1}{\frac{1}{n^2}}=\frac{\pi^2}{6} .
\end{align}
Plugging Eqs. \eqref{eq:<t(1),1>} and \eqref{eq:zeta_function} into
Eq. \eqref{eq:<t1>,epsilon=0} yields
\begin{align}  \label{eq:meant1_eps0}
\langle t_1 \rangle_{\epsilon=0} &= \frac{2}{D_1} \left( 1+\lambda\frac{1-(1-a)^2}{4D_2} \right)
\sum_{n \ge 1}{\frac{1}{n^2+\frac{\lambda}{D_1}\left(1-(1-a)^n\right)}}.
\end{align}
We retrieved the exact representation of the mean exit time for
point-like target that was first derived in \cite{Benichou10}.

For $0<a<1$, in the following inequalities
\begin{align*}
 \sum_{n \ge 1}{\frac{1}{n^2+\frac{\lambda}{D_1}}} &\le \sum_{n\ge 1}{\frac{1}{n^2+(1-(1-a)^n)\frac{\lambda}{D_1}}}  
\le  \sum_{n\ge 1}{\frac{1}{n^2+a\frac{\lambda}{D_1}}} , 
\end{align*}
Since $f(x) = \frac{1}{x^2+\frac{a\lambda}{D_1}}$ is decreasing, by
applying the property that $\int_{n}^{n+1}{f(x) dx} \le f(n) \le
\int_{n-1}^{n}{f(x) dx}$, one can replace sums by integrals to get
\begin{align*}
\sqrt{\frac{D_1}{\lambda}} \frac{\pi}{2} F(D_1/\lambda) 
& \le \sum_{n\ge 1}{\frac{1}{n^2+(1-(1-a)^n)\frac{\lambda}{D_1}}} \le \sqrt{\frac{D_1}{a\lambda}} 
\frac{\pi}{2},
\end{align*}
where $F(x) = 1 - \frac{2}{\pi} \arctan(\sqrt{x})$.  One obtains
therefore
\begin{align}
\frac{\pi F(D_1/\lambda)}{\sqrt{D_1 \lambda}} \left( 1+\lambda\frac{1-(1-a)^2}{4D_2} \right)  &\le \langle t_1 \rangle_{\epsilon=0} 
\le \frac{\pi}{\sqrt{a D_1 \lambda}}\left( 1+\lambda\frac{1-(1-a)^2}{4D_2} \right) .
\end{align}
In the limit of large $\lambda$, $F(x)$ can be approximated by $1$,
i.e.
\begin{align}  \label{eq:t1_ineq}
\frac{1-(1-a)^2}{4D_2 \sqrt{D_1}} \pi\sqrt{\lambda}+\frac{\pi}{\sqrt{D_1\lambda}} &\lesssim \langle t_1 \rangle_{\epsilon=0} 
\le  \frac{1-(1-a)^2}{4D_2 \sqrt{D_1a}} \pi\sqrt{\lambda}+\frac{\pi}{\sqrt{D_1a\lambda}} .
\end{align}
Therefore, we conclude that $\langle t_1 \rangle_{\epsilon=0}$
asymptotically increases to infinity, as illustrated on
Fig. \ref{fig:t1_eps}a.  As a result, if $\left. \frac{\partial
\langle t_1 \rangle_{\epsilon=0}}{\partial
\lambda}\right|_{\lambda=0} <0$, then $\langle t_1
\rangle_{\epsilon=0}$ has a minimum.  We compute
\begin{align}
&\left. \frac{\partial \langle t_1 \rangle_{\epsilon=0}}{\partial
\lambda}\right|_{\lambda=0} =
\frac{2}{D_1}\left( \frac{1-(1-a)^2}{4D_2}\sum_{n\ge 1}{\frac{1}{n^2}} - \frac{1}{D_1}\sum_{n \ge 1}{\frac{1-(1-a)^n}{n^4}} \right)<0 \nonumber \\
&\Leftrightarrow \frac{1-(1-a)^2}{4D_2}\frac{\pi^2}{6} < \frac{1}{D_1}\sum_{n \ge 1}{\frac{1-(1-a)^n}{n^4}} \nonumber \\
&\Leftrightarrow D_2 > D_{2,\rm crit} = D_1 \frac{\pi^2(1-(1-a)^2)}{24 \sum_{n \ge 1}{\frac{1-(1-a)^n}{n^4}}}
 \label{eq:inequation_define_ratio_D1/D2}
\end{align}
We retrieved the optimality condition first reported in
\cite{Benichou10}.  The relation
\eqref{eq:inequation_define_ratio_D1/D2} determines the critical value
of the pure bulk diffusion coefficient $D_{2,\rm crit}$, which for small
$a$ can be approximated as
\begin{equation}
\lim_{a \rightarrow 0} D_{2,\rm crit} = D_1 \frac{\pi^2}{12 \zeta(3)} \approx 0.68 D_1.
\end{equation}
If $D_2 > D_{2,\rm crit}$, then $\langle t_1 \rangle_{\epsilon=0}$ has
a minimum.

\begin{figure}[H]
\begin{center}
	\includegraphics[width=0.49 \textwidth]{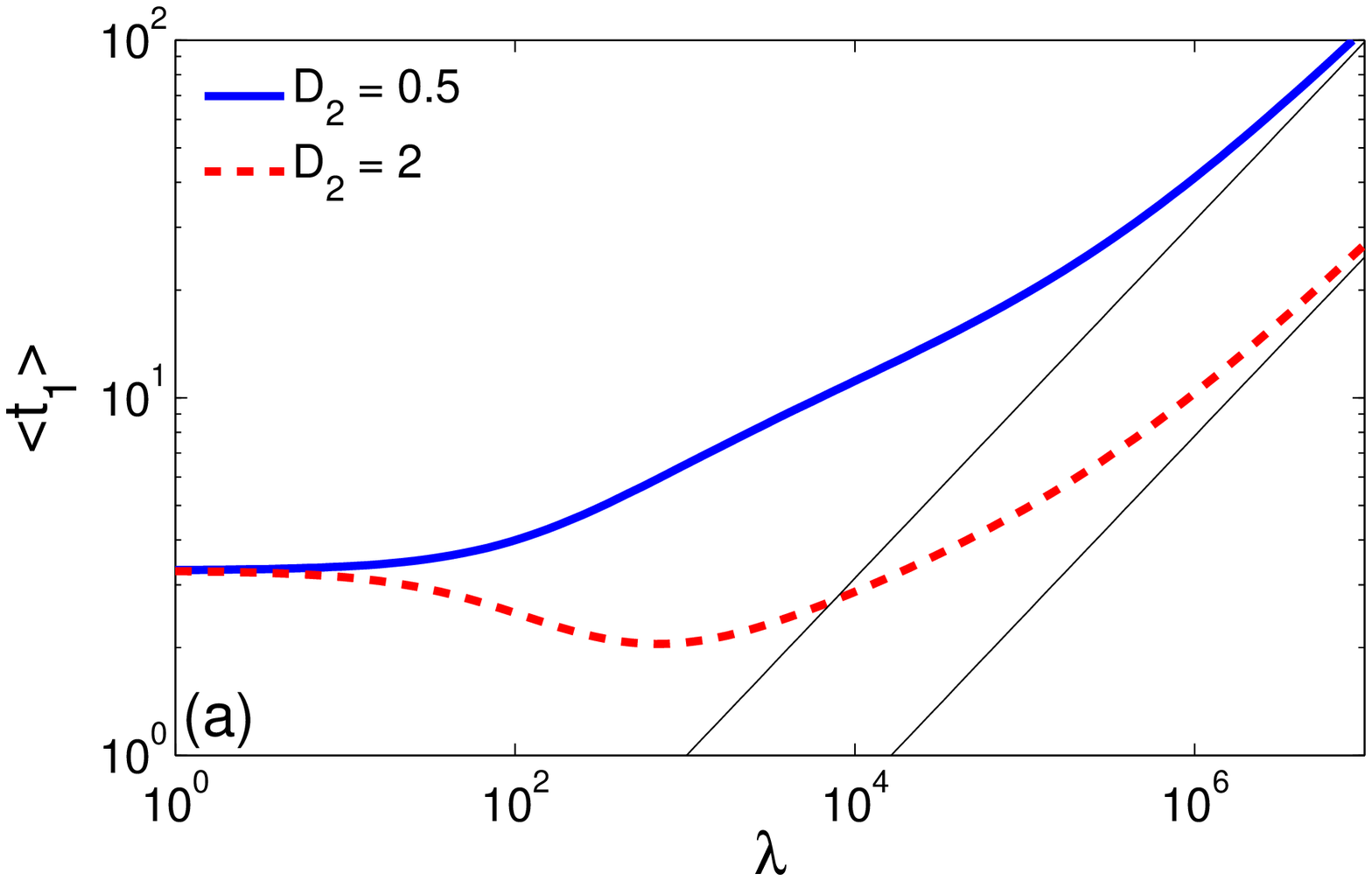}
	\includegraphics[width=0.49 \textwidth]{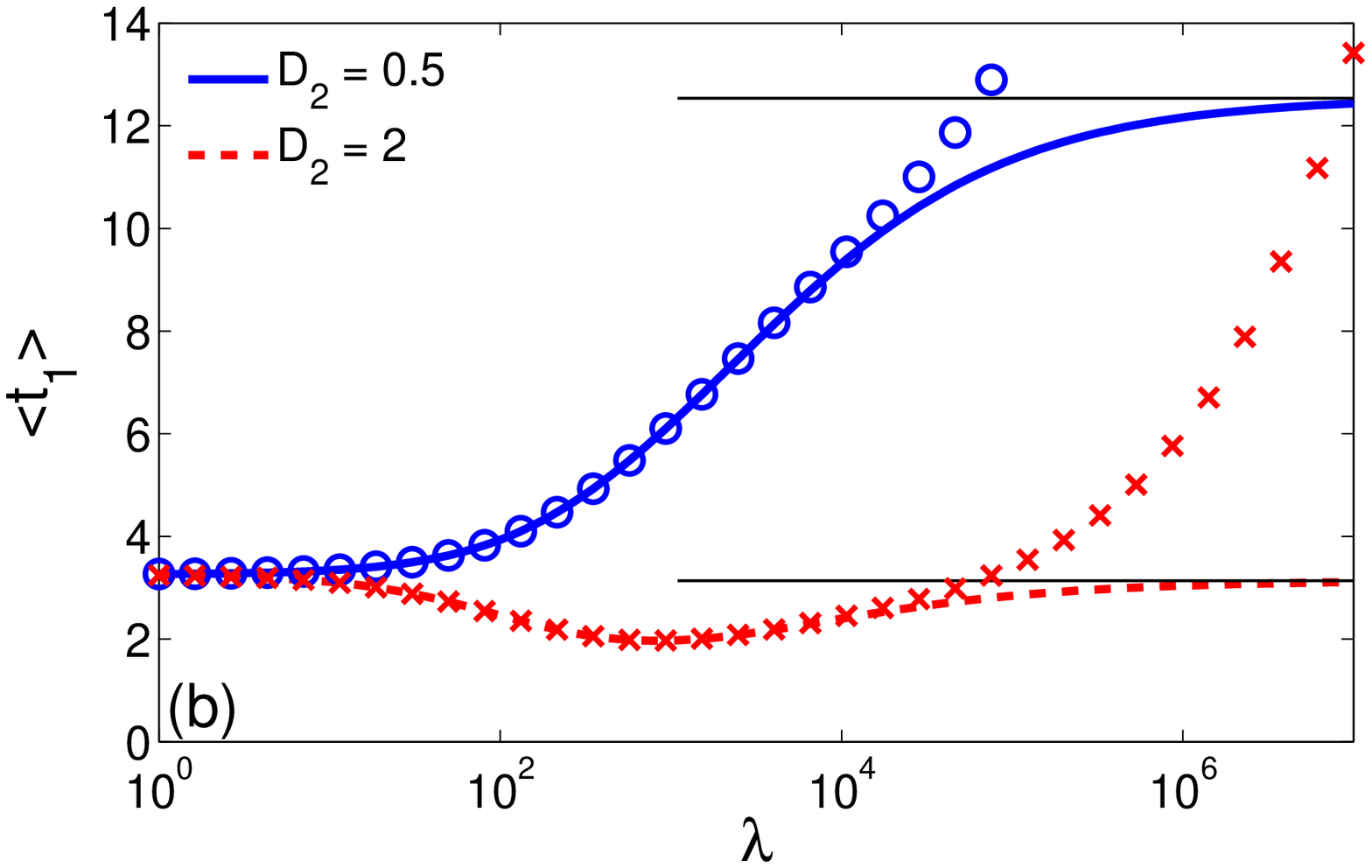}
\end{center}
\caption{
Mean exit time $\langle t_1 \rangle$ as a function of $\lambda$ for
$\epsilon = 0$ ({\bf a}) and $\epsilon = 0.01$ ({\bf b}), with
$a=0.01$ and $D_1=1$.  When $D_2=0.5 < D_{2,\rm crit}$ (blue solid
line), $\langle t_1 \rangle$ monotonously increases with $\lambda$ so
that the smallest mean exit time corresponds to $\lambda = 0$ (surface
diffusion without intermittence).  When $D_2 = 2 > D_{2,\rm crit}$
(red dashed line), $\langle t_1 \rangle$ starts first to decrease with
$\lambda$, passes through a minimum and monotonously increases to
infinity.  For point-like target ({\bf a}), thin solid lines indicate
the leading term of the lower bound $\frac{1-(1-a)^2}{4D_2 \sqrt{D_1}}
\pi \sqrt{\lambda}$ from Eq. \eqref{eq:t1_ineq}.  Note that the upper
bound, which is larger by $a^{-1/2}$, strongly overestimates the mean
exit time.  Finally, correction terms of the order $\lambda^{-1/2}$
are negligible for large $\lambda$.  For extended target ({\bf b}),
horizontal lines indicate the limiting values of the mean exit time as
$\lambda\to\infty$.  Symbols present the diagonal approximation
\eqref{eq:t1_approx}.}
\label{fig:t1_eps}
\end{figure}

\subsection{Extended target ($\epsilon > 0$)}

In sharp contrast to a point-like target, the mean exit time $\langle
t_1 \rangle$ to an extended target ($\epsilon >0$) converges in a
finite limit as $\lambda \rightarrow \infty$.  As a consequence, the
quantity $\langle \tilde{T}(1),1 \rangle -\frac{\lambda}{D_1}
\sum_{n\geq 0}{\frac{\psi_n^2}{1+\frac{\lambda}{D_1}\lambda_n}}$ must
converge to $0$ as $\lambda \rightarrow \infty$.  Given that 
\begin{align*}
\lim_{\lambda \rightarrow \infty}{\sum_{n \ge 1}{\frac{\lambda\psi_n^2}{1+\frac{\lambda}{D_1}\lambda_n}}} 
= D_1 \sum_{n \ge 1}{\frac{\psi_n^2}{\lambda_n}},
\end{align*}
we deduce
\begin{align}\label{eq:<T(1),1>}
\langle \tilde{T}(1),1 \rangle = \sum_{n \ge 1}{\frac{\psi_n^2}{\lambda_n}} ,
\end{align}
and this series converges.  Setting $\lambda = 0$ into
Eq. \eqref{eq:<t1>2}, the above expression can be identified to the
mean exit time for surface diffusion phase:
\begin{equation}
\label{eq:t1_surface}
\langle t_1 \rangle_{\lambda = 0} =  \frac{1}{\pi D_1} \sum_{n \ge 1}{\frac{\psi_n^2}{\lambda_n}} = 
\frac{1}{\pi D_1} \langle \tilde{T}(1),1 \rangle = \frac{(\pi - \epsilon)^3}{3\pi D_1} ,
\end{equation}
because $\tilde{T}(1) = \frac{(\pi-\epsilon)^2 - \theta^2}{2}$ for
$0\leq \theta < \pi-\epsilon$, and $0$ otherwise.

Plugging Eq. \eqref{eq:<T(1),1>} into Eq. \eqref{eq:<t1>2}, we obtain
\begin{align}\label{eq:<t1>3}
\langle t_1 \rangle = \frac{1}{\pi}\left(\frac{1}{\lambda}+\frac{1-(1-a)^2}{4D_2}\right)\sum_{n \ge 1}
\frac{\psi_n^2}{\lambda_n \left( \frac{D_1}{\lambda}+\lambda_n \right)}.
\end{align}
This formula generalizes Eq. \eqref{eq:meant1_eps0} to extended
targets.

Similarly, we have
\begin{align*}
\lim_{\lambda \rightarrow \infty}{\sum_{n \ge 1}{\frac{\psi_n^2}{\lambda_n \left( \frac{D_1}{\lambda}+\lambda_n \right)}}} 
= \sum_{n \ge 1}{\frac{\psi_n^2}{\lambda_n^2}} ,
\end{align*}
so that
\begin{align}\label{eq:lim<t1>}
\L = \lim_{\lambda \rightarrow \infty}{ \langle t_1 \rangle}= \frac{1-(1-a)^2}{4\pi D_2}\sum_{n \ge 1}{\frac{\psi_n^2}{\lambda_n^2}} ,
\end{align}
i.e., we got a spectral representation for the time $\L$ corresponding
to pure bulk diffusion.  Note also that the series $S= \sum_{n \ge
1}{\frac{\psi_n^2}{\lambda_n^2}}$ converges.

We may now state the main theorem of this work:
\begin{theorem}
\label{thm:theorem1}
The function $\lambda \mapsto \langle t_1 \rangle$ is eventually
increasing as $\lambda \rightarrow +\infty$.  Moreover, if
\begin{align*}
\left. \frac{d \langle t_1 \rangle}{d \lambda}\right|_{\lambda=0} < 0 ,
\end{align*}
then the function $\langle t_1 \rangle$ passes through a minimum.
\end{theorem}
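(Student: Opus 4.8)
The plan is to work directly from the spectral representation \eqref{eq:<t1>3}. Writing $\mu = D_1/\lambda$ so that $\lambda \to \infty$ corresponds to $\mu \to 0^+$, and absorbing constants, one has
\begin{align*}
\langle t_1 \rangle = \frac{1}{\pi}\Bigl(\frac{\mu}{D_1}+\frac{1-(1-a)^2}{4D_2}\Bigr)\, G(\mu), \qquad
G(\mu) = \sum_{n\ge 1}\frac{\psi_n^2}{\lambda_n(\mu+\lambda_n)} .
\end{align*}
The key observation is that $G$ is a Stieltjes-type function of $\mu$: it is positive, strictly decreasing (termwise, $\partial_\mu[\lambda_n(\mu+\lambda_n)]^{-1}<0$), and strictly convex in $\mu$, while $G(0^+) = S = \sum_n \psi_n^2/\lambda_n^2 < \infty$ by \eqref{eq:lim<t1>}. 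First I would differentiate $\langle t_1\rangle$ with respect to $\lambda$ (or equivalently handle the sign of $d\langle t_1\rangle/d\lambda$ via the chain rule $d\mu/d\lambda = -D_1/\lambda^2 < 0$) and show that the derivative is eventually positive. Concretely, as $\lambda\to\infty$ the prefactor tends to the positive constant $c := \frac{1-(1-a)^2}{4\pi D_2}$ and $G(\mu)\to S$, so $\langle t_1\rangle \to \L = cS > 0$; expanding to first order, $\langle t_1\rangle = cS + \mu\bigl(\tfrac{S}{\pi D_1} - c\,G'(0^+)\cdot(\text{sign bookkeeping})\bigr)+o(\mu)$. The cleanest route is: the prefactor is increasing in $\lambda$ (its derivative is $-\frac{1}{\pi\lambda^2}<0$ in $\mu$, i.e.\ $+$ in $\lambda$... ) — more carefully, $\frac{1}{\lambda}+\frac{1-(1-a)^2}{4D_2}$ is decreasing in $\lambda$, but $G$ composed with $\mu(\lambda)$ is \emph{increasing} in $\lambda$; the point is that the second factor's variation dominates for large $\lambda$. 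So I would instead bound things termwise.

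More robustly, I would prove eventual monotonicity by the following estimate. Split $\langle t_1\rangle = \frac{1}{\pi\lambda}G(\mu) + c\,G(\mu)$. The first term is $O(\lambda^{-1})$ and its derivative is $O(\lambda^{-2})$ in absolute value. For the second term, $\frac{d}{d\lambda}\bigl(c\,G(\mu)\bigr) = c\,G'(\mu)\cdot(-D_1/\lambda^2)$; since $G' < 0$, this is $+$ and of order $\lambda^{-2}\cdot|G'(\mu)|$. The decisive inequality is that $|G'(\mu)| = \sum_n \frac{\psi_n^2(\mu+2\lambda_n)}{\lambda_n^2(\mu+\lambda_n)^2}$ stays bounded below by a positive constant as $\mu\to 0$ (it tends to $\sum_n 2\psi_n^2/\lambda_n^3$, finite or $+\infty$, in either case bounded below by the $n=1$ term), whereas the derivative of the first piece is genuinely $o(\lambda^{-2}\cdot\text{const})$ — so the positive contribution wins and $d\langle t_1\rangle/d\lambda > 0$ for $\lambda$ large. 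This gives the "eventually increasing" claim. For the second assertion: $\langle t_1\rangle$ is continuous on $[0,\infty)$, finite everywhere, has $\lim_{\lambda\to\infty}\langle t_1\rangle = \L$ finite, and is eventually increasing; if in addition $\frac{d\langle t_1\rangle}{d\lambda}\big|_{\lambda=0}<0$, then $\langle t_1\rangle$ initially decreases, hence drops strictly below $\langle t_1\rangle_{\lambda=0}$, then must turn around to reach its eventual increasing regime, so by the extreme value theorem it attains a global minimum at some interior $\lambda^\ast>0$.

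The main obstacle is making the "eventually increasing" step fully rigorous: one must justify differentiating the series $G(\mu)$ term by term and control the tail uniformly for small $\mu$. The cleanest fix is to note $\lambda_n \to 0$ and $\psi_n \in \ell^2$ with $\psi_n^2/\lambda_n^2$ summable (all established above), and that for $\mu \in [0,1]$ the summand of $G'$ is dominated by $\frac{\psi_n^2(1+2\lambda_n)}{\lambda_n^2\cdot\lambda_n^2} = \psi_n^2(1+2\lambda_n)/\lambda_n^4$ — which need not be summable a priori. So one should \emph{not} push $\mu$ all the way to $0$ in the bound; instead keep $\mu = D_1/\lambda$ strictly positive and use the elementary bound $\mu+\lambda_n \ge \lambda_n$ to get $|G'(\mu)| \ge \sum_n \frac{\psi_n^2}{\lambda_n^2(\mu+\lambda_n)} \ge \frac{\psi_1^2}{\lambda_1^2(\mu+\lambda_1)} \ge \frac{\psi_1^2}{\lambda_1^2(D_1+\lambda_1)} > 0$ for all $\lambda \ge 1$, which is all that is needed. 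Pairing this uniform lower bound on $|G'|$ against the explicit $O(\lambda^{-2})$ decay of the derivative of the $\frac{1}{\pi\lambda}G(\mu)$ term then closes the argument cleanly, and the minimum claim follows by elementary continuity/compactness as above.
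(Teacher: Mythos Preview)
Your argument for the ``eventually increasing'' claim has a genuine gap. After splitting $\langle t_1\rangle = \frac{1}{\pi\lambda}G(\mu) + c\,G(\mu)$ with $c=\frac{1-(1-a)^2}{4\pi D_2}$, the derivative is
\[
\frac{d\langle t_1\rangle}{d\lambda}
= -\frac{G(\mu)}{\pi\lambda^2}
+ \frac{D_1\,|G'(\mu)|}{\pi\lambda^{3}}
+ \frac{c\,D_1\,|G'(\mu)|}{\lambda^{2}},
\qquad |G'(\mu)|=\sum_{n\ge 1}\frac{\psi_n^2}{\lambda_n(\mu+\lambda_n)^2}.
\]
(The formula you wrote for $|G'|$ is not quite this, but that is a side issue.) The negative piece is \emph{not} $o(\lambda^{-2})$: it satisfies $\lambda^{2}\cdot\frac{G(\mu)}{\pi\lambda^{2}}\to \frac{S}{\pi}>0$ with $S=\sum_n\psi_n^2/\lambda_n^2$. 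Your uniform lower bound $|G'(\mu)|\ge \psi_1^2/[\lambda_1(D_1+\lambda_1)^2]$ only yields a positive contribution of order $\lambda^{-2}$ with some fixed constant $cD_1C_0$. Since $c$ contains $1/D_2$ and can be made arbitrarily small, nothing prevents $cD_1C_0 < S/\pi$; then both competing terms are exactly of order $\lambda^{-2}$ and the negative one wins, so your sign argument does not close.

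What the paper actually uses to force positivity is the lemma
\[
\sum_{n\ge 1}\frac{\psi_n^2}{\lambda_n^3}=+\infty,
\]
which is equivalent to $|G'(\mu)|\to+\infty$ as $\mu\to 0$. With this in hand, the positive coefficient $cD_1|G'(\mu)|$ blows up while the negative coefficient stays bounded by $S/\pi$, and eventual positivity follows for \emph{every} choice of $D_1,D_2,a$. Proving this lemma is the real content: the paper does it by showing that if the series converged, then $\sum_n(\psi_n/\lambda_n)e_n$ would lie in $H^1$, and then deriving a contradiction via a constrained-minimization/regularity argument that produces a discontinuous function which cannot be in $H^1$. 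Your proposal skips this step entirely, and without it the proof does not go through. (Your treatment of the ``minimum'' clause via continuity and eventual increase is fine once the first part is established.)
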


\begin{proof}
\begin{lemma}
\begin{align*}
\sum_{n \ge 1}{\frac{\psi_n^2}{\lambda_n^3}}=+\infty .
\end{align*}
\end{lemma}
The proof of this lemma will be given in the next section.  We assume
it is true for the rest of the proof of the main theorem.

We rewrite Eq. \eqref{eq:<t1>3} as follows
\begin{align}\label{eq:behavior_<t1>}
\langle t_1 \rangle = \L + \frac{S}{\pi \lambda}-\frac{D_1}{\lambda}\frac{1-(1-a)^2}{4\pi D_2}
\sum_{n \ge 1}{\frac{\psi_n^2}{\lambda_n^2\left(\lambda_n+\frac{D_1}{\lambda}\right)}} - \frac{D_1}{\pi \lambda^2}
\sum_{n \ge 1}{\frac{\psi_n^2}{\lambda^2 \left( \lambda_n + \frac{D_1}{\lambda} \right)}} .
\end{align}
Since
\begin{align*}
\lim_{\lambda \rightarrow \infty}{\sum_{n \ge 1}{\frac{\psi_n^2}{\lambda_n^2\left(\lambda_n+\frac{D_1}{\lambda}\right)}}} 
= \sum_{n \ge 1}{\frac{\psi_n^2}{\lambda_n^3}} = +\infty ,
\end{align*}
we see that
\begin{align*}
\langle t_1 \rangle = \L -\frac{D_1}{\lambda}\frac{1-(1-a)^2}{4\pi D_2}\sum_{n \ge 1}
{\frac{\psi_n^2}{\lambda_n^2\left(\lambda_n+\frac{D_1}{\lambda}\right)}} + O(\lambda^{-1}) ,
\end{align*}
from which the principle part of $\langle t_1 \rangle$ is indeed
non-decreasing as $\lambda \rightarrow \infty$.

To complete the proof, we differentiate Eq. \eqref{eq:behavior_<t1>}
(the exchange of derivative and sum is easily established):
\begin{align*}
\pi \frac{d \langle t_1 \rangle}{d\lambda} = -\frac{S}{\lambda^2} +
\frac{1}{\lambda^2}\frac{D_1\left(1-(1-a)^2\right)}{4D_2}
\sum_{n \ge 1}\frac{\psi_n^2}{\lambda_n \left(\lambda_n+\frac{D_1}{\lambda}\right)^2} + 
\frac{D_1}{\lambda^3} \sum_{n \ge 0}{\frac{\psi_n^2 \left( 2 \lambda_n + D_1/ \lambda \right)}{\lambda_n^2 \left( \lambda_n + D_1/ \lambda \right)^2}} ,
\end{align*}
which becomes positive at large $\lambda$ because
\begin{align*}
\lim_{\lambda \rightarrow \infty}\sum_{n \ge 1}{\frac{\psi_n^2}{\lambda_n}\frac{1}{\left(\lambda_n+\frac{D_1}{\lambda}\right)^2}} 
= \sum_{n \ge 1}{\frac{\psi_n^2}{\lambda_n^3}} = +\infty.
\end{align*}
Returning to formula \eqref{eq:<t1>2}, we have
\begin{align*}
\pi \frac{d \langle t_1 \rangle}{d\lambda} = &\frac{1-(1-a)^2}{4D_1D_2}\left[ \langle\tilde{T}(1),1 \rangle - 
\frac{\lambda}{D_1}\sum_{n \ge 1}{\frac{\psi_n^2}{1+\frac{\lambda \lambda_n}{D_1}}} \right] \\
&+\frac{1}{D_1}\left(1+\frac{\lambda\left(1-(1-a)^2\right)}{4D_2}\right)\left[-\frac{1}{D_1}
\sum_{n \ge 1}{\frac{\psi_n^2}{1+\frac{\lambda \lambda_n}{D_1}}} + \frac{\lambda}{D_1}
\sum_{n \ge 1}{\frac{\psi_n^2 \frac{\lambda_n}{D_1}}{\left(1+\frac{\lambda \lambda_n}{D_1}\right)^2}}\right].
\end{align*}
In particular, one gets
\begin{align*}
\pi \left.\frac{d \langle t_1 \rangle}{d\lambda}\right|_{\lambda=0} &= \frac{1-(1-a)^2}{4D_1D_2} \langle\tilde{T}(1),1 \rangle
+ \frac{1}{D_1^2}\left[ - \sum_{n \ge 1}{\psi_n^2} \right]\\
&=\frac{1-(1-a)^2}{4D_1D_2} \langle\tilde{T}(1),1 \rangle-\frac{1}{D_1^2}\| V\tilde{T}(1) \|^2 ,
\end{align*}
which becomes negative when
\begin{align}
\| V\tilde{T}(1) \|^2 & > \frac{D_1}{4D_2}\left(1-(1-a)^2\right)\langle \tilde{T}(1),1 \rangle \label{eq:ratio_of_D1/D2}\\
\Leftrightarrow \| V\tilde{T}V(\psi_0) \|^2 & > \frac{D_1}{4D_2}\left(1-(1-a)^2\right)\langle V\tilde{T}V(\psi_0),\psi_0 \rangle .
\end{align}
Note that
\begin{align}
\psi & = V\tilde{T}(1) = V\left(\frac{(\pi-\epsilon)^2-\theta^2}{2}\right) = V\left(\frac{2}{\pi}\sum\limits_{n\geq 1}^\infty \cos n\theta
~ \langle \frac{(\pi-\epsilon)^2-\theta^2}{2} , \cos n\theta \rangle \right)  \nonumber\\
& = \frac{2}{\pi}\sum\limits_{n\geq 1}^\infty \cos n\theta  ~ \sqrt{1-(1-a)^n} ~
\langle \frac{(\pi-\epsilon)^2-\theta^2}{2} , \cos n\theta \rangle \nonumber \\
& = \frac{2}{\pi}\sum\limits_{n\geq 1}^\infty \cos n\theta  ~ \sqrt{1-(1-a)^n} ~ 
(-1)^{n-1} \frac{(\pi-\epsilon)\cos n\epsilon + \frac{\sin n\epsilon}{n}}{n^2} .
 \nonumber 
\end{align}

\begin{remark}
The inequality \eqref{eq:ratio_of_D1/D2} determines the critical value
for the pure bulk diffusion coefficient $D_{2,\rm crit}$ above which pure bulk
excursions are beneficial.  The existence of the optimal value
$\lambda$ (that minimizes the function $\langle t_1 \rangle$) depends
on this ratio.  If $D_2 > D_{2,\rm crit}$, with
\begin{align*}
D_{2,\rm crit} & = D_1 \frac{(1-(1-a)^2) \langle \tilde{T}(1),1 \rangle}{4 \Vert V\tilde{T}(1) \Vert^2} \\
& = D_1 \frac{\pi(\pi-\epsilon)^3(1-(1-a)^2)}{24} \left(
\sum_{n \ge 1} \frac{1-(1-a)^n}{n^4}\left[ (\pi-\epsilon)\cos n\epsilon+\frac{\sin n\epsilon}{n} \right]^2 \right)^{-1} , \\
\end{align*}
then $\langle t_1 \rangle$ starts first to decrease with $\lambda$,
passes through a minimum then monotonously increases.  We retrieved
the optimality condition first reported in \cite{Benichou10}.
\end{remark}
\end{proof}

\begin{proof}[Proof of the main lemma]
First we define $\tilde{e}_n = \frac{1}{\lambda_n}\tilde{T}Ve_n$ so
that $V\tilde{e}_n=e_n$.

Let $u$ be such that $\psi=V\tilde{T}V(u)$; $u$ must be of the form
$\psi_0+u^{\perp}$, where $u^{\perp} \in \ker (V\tilde{T}V)$.  Let
\begin{align*}
u_n &= \langle u,e_n \rangle = \frac{1}{\lambda_n} \langle u, V\tilde{T}V e_n \rangle 
= \frac{1}{\lambda_n} \langle V\tilde{T}V u, e_n \rangle = \frac{\psi_n}{\lambda_n} .
\end{align*}

\begin{remark}
This computation gives another proof that $\sum_{n \ge
1}{\frac{\psi_n^2}{\lambda_n^2}} <\infty$ and thus also that $\sum_{n
\ge 1}{\frac{\psi_n^2}{\lambda_n}} <\infty$.
\end{remark}

Now, on $[0,\pi-\epsilon]$, $-\lambda_n \tilde{e}''_n = V e_n$ because
$\tilde{e}_n = \frac{1}{\lambda_n} \tilde{T}V e_n$.  Let $m,n$ be
integers, and we establish:
\begin{align*}
\lambda_n \langle \tilde{e}'_n, \tilde{e}'_m \rangle = -\lambda_n \langle \tilde{e}''_n, \tilde{e}_m \rangle = \langle Ve_n, \tilde{e}_m \rangle 
= \langle e_n, V\tilde{e}_m \rangle = \langle e_n, e_m \rangle =\delta_{m,n}.
\end{align*}
Setting $\epsilon_n = \sqrt{\lambda_n}\tilde{e}_n$, we get that
$(\epsilon'_n)$ is an orthonormal system of $L^2\left( [0,\pi]
\right)$.

Next assume that $\sum_{n \ge 1}{\frac{\psi_n^2}{\lambda_n^3}}
<\infty$: then the above computation shows that $\sum_{n \ge
1}{\frac{\psi_n}{\lambda_n}\tilde{e}_n}$ is a function of the Sobolev
space $H^1\left( [0,\pi] \right)$.

Before we continue let us observe that the operator $I-V$ is
regularizing: $\forall f \in L^2\left( [0,\pi] \right)$, $Vf=f+g$
where $g \in C^{\infty}\left( [0,\pi] \right)$.

This implies that
\begin{align*}
V\left( \sum_{n \ge 1}{\frac{\psi_n}{\lambda_n}\tilde{e}_n} \right) = \sum_{n \ge 1}{\frac{\psi_n}{\lambda_n}\tilde{e}_n}+g , \quad g\in C^{\infty}.
\end{align*}
On the other hand
\begin{align*}
V\left( \sum_{n \ge 1}{\frac{\psi_n}{\lambda_n}\tilde{e}_n} \right) = \sum_{n \ge 1}{\frac{\psi_n}{\lambda_n}e_n} ,
\end{align*}
and thus $u_0=\sum_{n \ge 1}{\frac{\psi_n}{\lambda_n}e_n} \in H^1$.

But $u_0$ minimizes $\|v\|_2^2$ on the set of $v$ such that
$\int_{0}^{\pi}{(Vv-1)^2}=0$,
$\int_{\pi-\epsilon}^{\pi}{Vv}=-(\pi-\epsilon)$.

By the theory of constrained extrema, $v$ must be of the form $\lambda
V 1_{[\pi-\epsilon,\pi[} = \lambda 1_{[\pi-\epsilon,\pi[}+g$ with $g
\in C^{\infty}$. But such a function cannot be in $H^1$.

We have thus proven the main lemma.
\end{proof}

\section{Asymptotic behavior}

Here we analyze the asymptotic behavior of $\langle t_1
\rangle$ as $\lambda \to \infty$.  

For point-like target ($\epsilon = 0$), Eqs. \eqref{eq:eigen_eps0} and
\eqref{eq:psi_n} imply for small $a$ the existence of two
distinct asymptotic behaviors:
\begin{align}
\label{eq:lambda_asympt_eps0a}
\lambda_n & \simeq \frac{a}{n} ,   \quad \psi_n^2 \simeq \frac{2\pi a}{n^3}  \qquad (n \ll 1/a),  \\
\label{eq:lambda_asympt_eps0}
\lambda_n & \simeq \frac{1}{n^2} , \quad \psi_n^2 \simeq \frac{2\pi}{n^4} \qquad (n \gg 1/a) .
\end{align}

\begin{figure}[H]
\begin{center}
	\includegraphics[width=0.49 \textwidth]{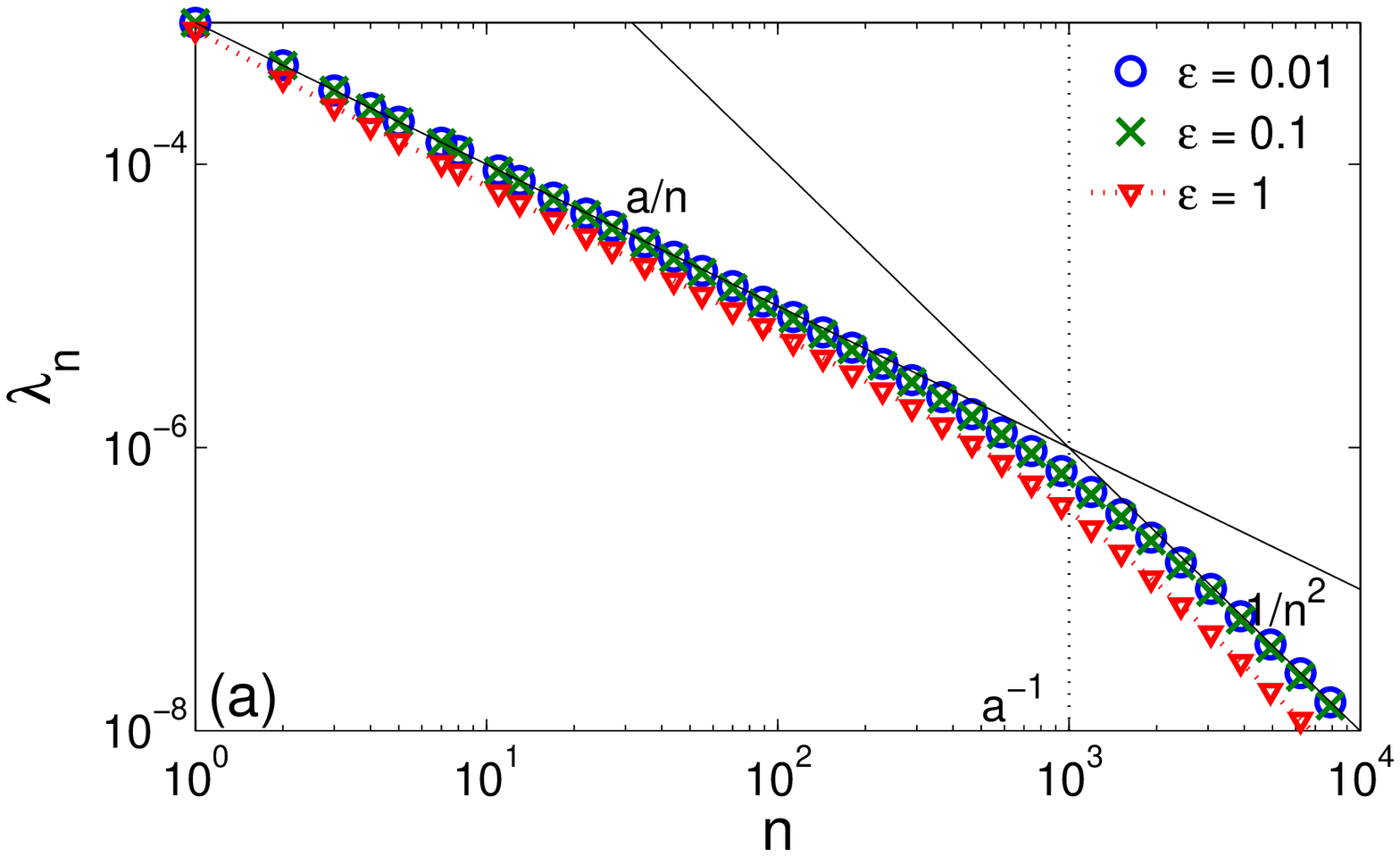}
	\includegraphics[width=0.49 \textwidth]{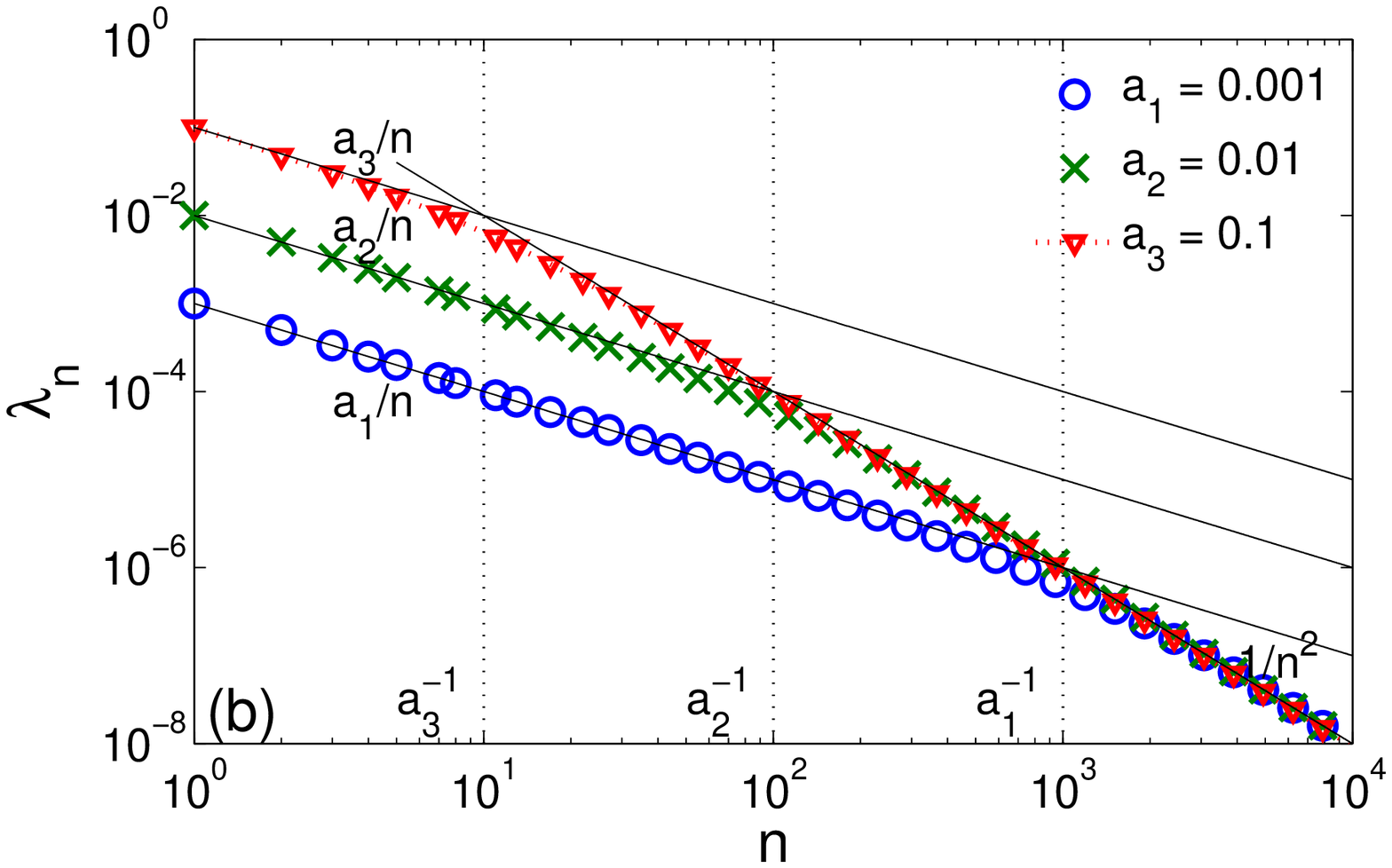}
\end{center}
\caption{
Eignvalues $\lambda_n$ of the operator $V\tilde{T}V$ for ({\bf a}) $a
= 0.001$ and three values $\epsilon$: $0.01$ (circles), $0.1$
(crosses), and $1$ (triangles); and ({\bf b}) for $\epsilon = 0.01$
and three values of $a$: $0.001$ (circles), $0.01$ (crosses), and
$0.1$ (triangles).  Solid lines show the asymptotic relations $a/n$
and $1/n^2$, while vertical dotted lines indicate the separation $1/a$
between these asymptotic regimes.  The coefficient $A_{\epsilon}$ in
front of $n^{-2}$ relation is close to $1$ for all small targets,
except for $\epsilon = 1$, see Eq. \eqref{eq:Aepsilon}.}
\label{fig:lambda}
\end{figure}

In this section, we analyze the case of extended targets ($\epsilon >
0$) by numerical computations of the eigenvalues $\lambda_n$ and the
spectral weights $\psi_n^2$ (see Appendix \ref{sec:numerics} for
computational details).  As for the point-like target, Figure
\ref{fig:lambda} allows one to distinguish two regimes for
$\lambda_n$, for small and large $n$:
\begin{equation}
\label{eq:asympt1}
\lambda_n \simeq \begin{cases} \tilde{A}_{a,\epsilon}~ n^{-1}    \quad (n \ll 1/a) , \cr
			      A_{\epsilon}~ n^{-2}  \hskip 6mm (n \gg 1/a) ,  \end{cases}
\end{equation}
where $\tilde{A}_{a,\epsilon}$ and $A_{\epsilon}$ are two constants.
In Appendix \ref{sec:asympt}, we prove the second (large $n$)
asymptotic relation and show that the constant $A_{\epsilon}$ does not
depend on $a$.  In turn, the transition between two asymptotic regimes
is determined by $1/a$ (and is independent of $\epsilon$).  Note also
that $A_\epsilon \to 1$ as $\epsilon\to 0$ according to
Eq. \eqref{eq:lambda_asympt_eps0}.  One can see that these asymptotic
relations accurately approximate the eigenvalues $\lambda_n$.  The
behavior of $A_\epsilon$ is shown on Fig. \ref{fig:Aeps}a.  As
expected, it does not depend on $a$.  These numerical results suggest
the following conjectural expression:
\begin{equation}
\label{eq:Aepsilon}
A_\epsilon = ( 1 - \epsilon/\pi)^2 ,
\end{equation}
which accurately reproduces $A_\epsilon$ on the whole range of
$\epsilon$ from $0$ to $\pi$.  According to
Eq. \eqref{eq:lambda_asympt_eps0a}, the coefficient
$\tilde{A}_{a,\epsilon}$ is equal to $a$ when $\epsilon = 0$.  We plot
therefore $\tilde{A}_{a,\epsilon}/a$ on Fig. \ref{fig:Aeps}b, where
this ratio approaches $1$ as $\epsilon\to 0$, and $0$ as $\epsilon \to
\pi$.  Moreover, this ratio weakly depends on $a$ (curves for $a =
0.001$ and $a = 0.1$ almost coincide).

\begin{figure}
\begin{center}
	\includegraphics[width=0.49 \textwidth]{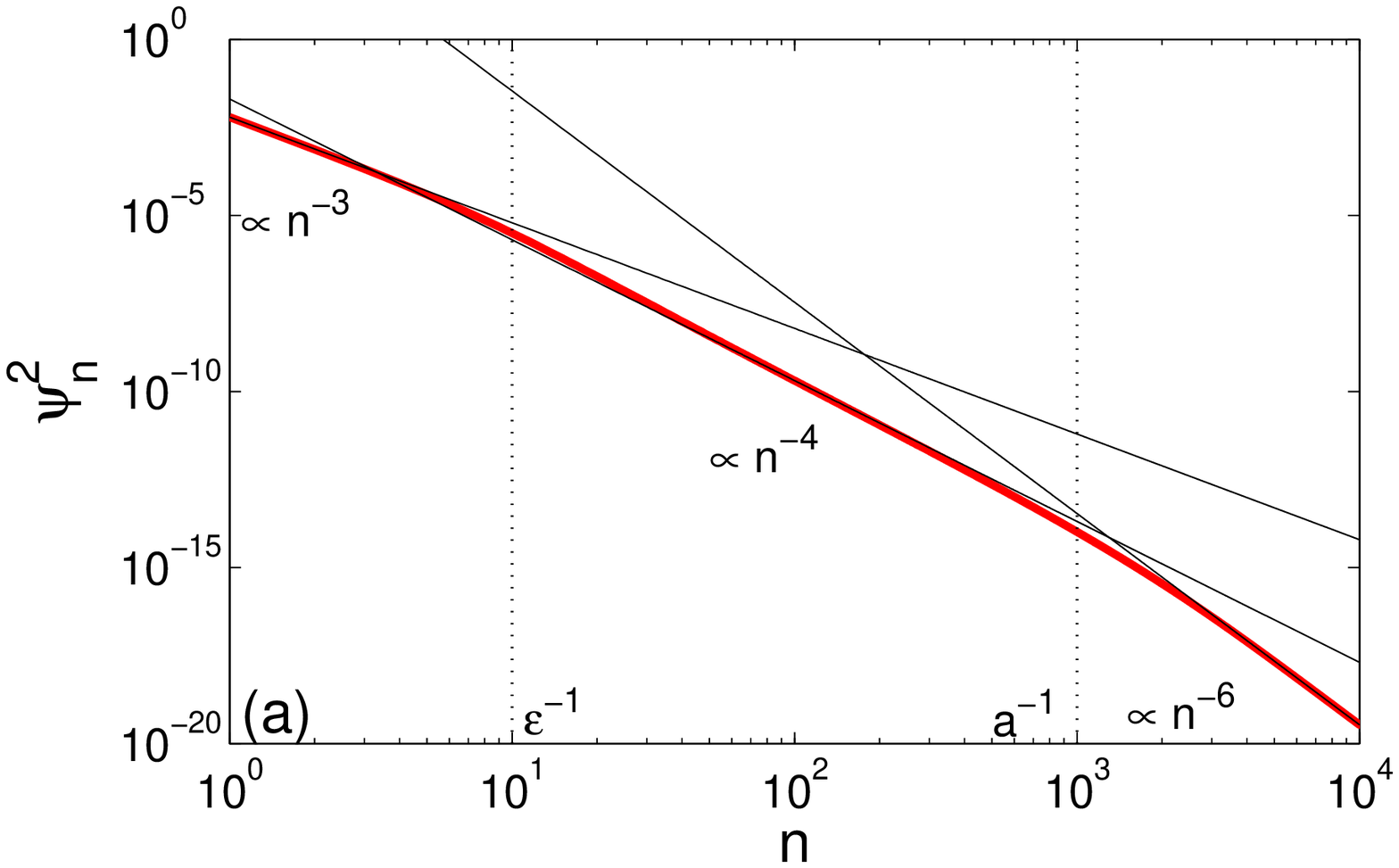}
	\includegraphics[width=0.49 \textwidth]{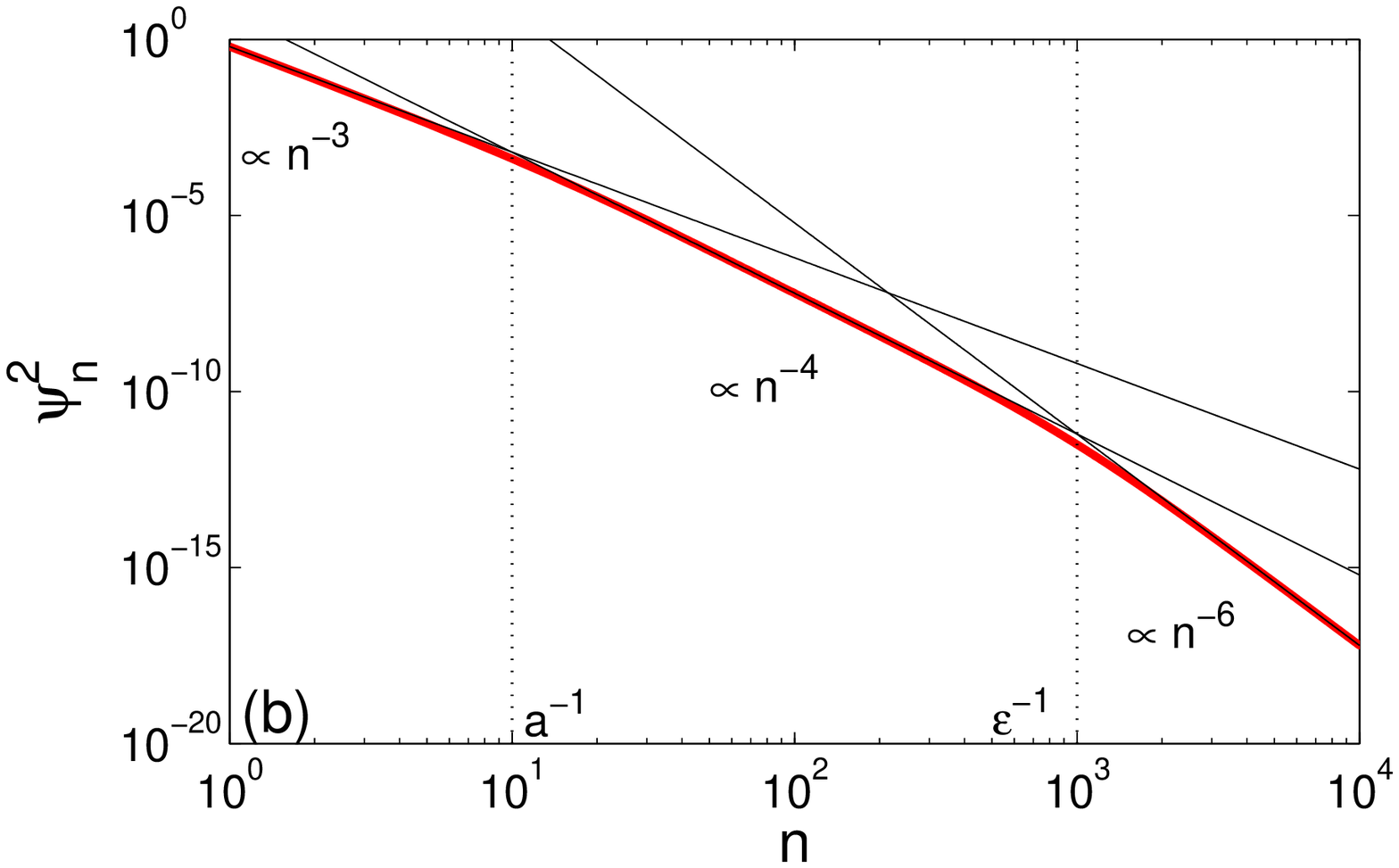}
	\includegraphics[width=0.49 \textwidth]{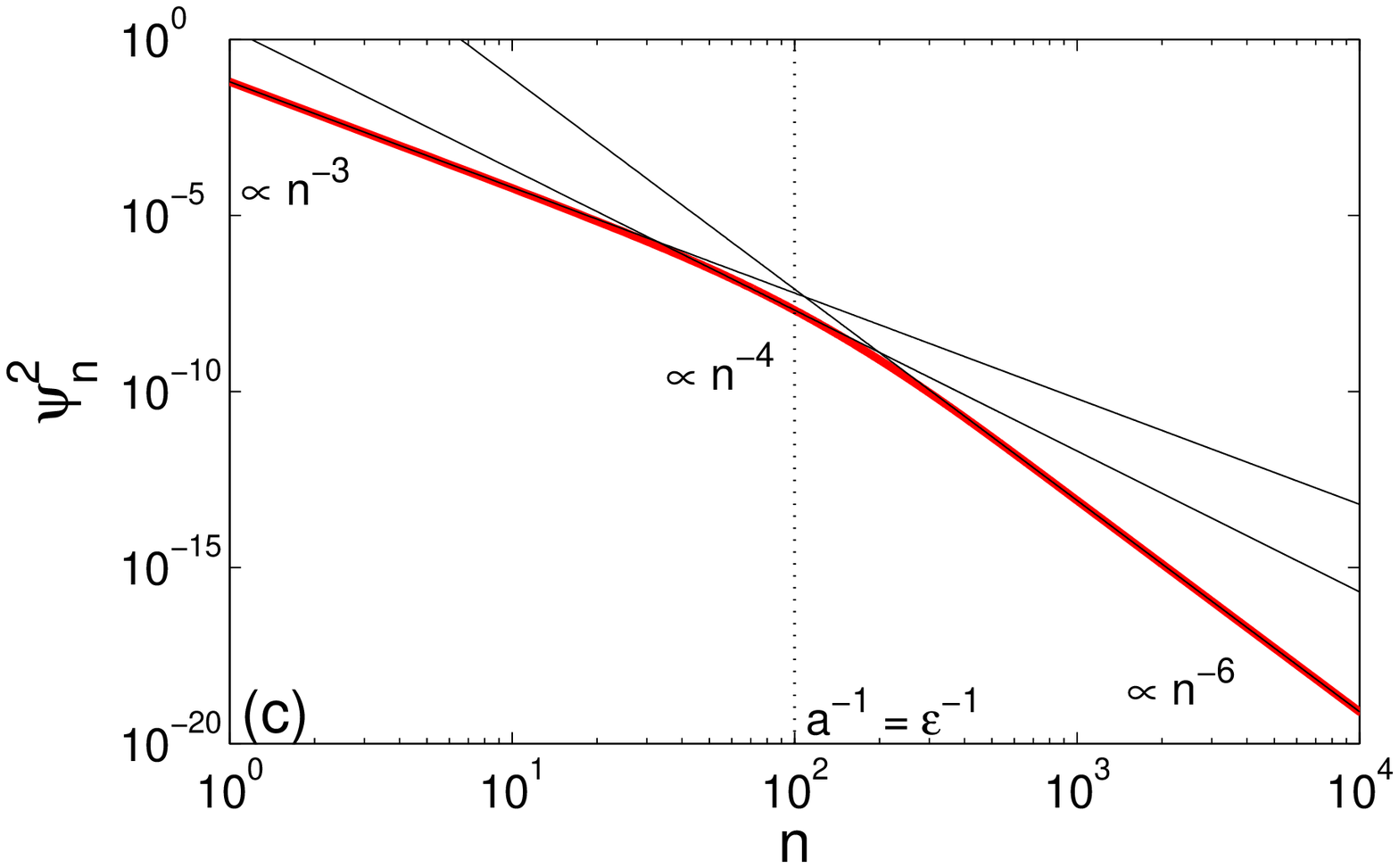}
\end{center}
\caption{
Spectral weights $\psi_n^2$ (shown by red solid line) for ({\bf a}) $a
= 0.001$ and $\epsilon = 0.1$; ({\bf b}) $a = 0.1$ and $\epsilon =
0.001$; and ({\bf c}) $a = \epsilon = 0.01$.  In the first two plots,
three asymptotic regimes can be distinguished according to
Eq. \eqref{eq:asympt2}, while the intermediate regime disappears in
the last plot.}
\label{fig:psi}
\end{figure}

Figure \ref{fig:psi} shows that the asymptotic behavior of the
spectral weights $\psi_n^2$ is more complicated.  One can distinguish
three asymptotic regimes:
\begin{equation}
\label{eq:asympt2}
\psi_n^2 \simeq \begin{cases}   \tilde{B}_{a,\epsilon}~ n^{-3}    \quad (n \ll \min\{1/a, 1/\epsilon\}) , \cr
				\tilde{B}'_{a,\epsilon}~ n^{-4}    \quad (\min\{1/a, 1/\epsilon\} \ll n \ll \max\{1/a,1/\epsilon\}) , \cr
			        B_{a,\epsilon}~ n^{-6}  \quad (n \gg \max\{1/a,1/\epsilon\}) ,  \end{cases}
\end{equation}
In order to observe all three regimes, one needs $1 \ll \min\{1/a,
1/\epsilon\} \ll \max\{1/a, 1/\epsilon\}$, i.e., either $a \ll
\epsilon \ll 1$, or $\epsilon \ll a \ll 1$.  For instance, if $a$ or
$\epsilon$ is not small enough, the first regime with $n^{-3}$ may not
be well established (Fig. \ref{fig:psi}a, b).  If $a \sim \epsilon$,
the intermediate regime disappears, as illustrated on
Fig. \ref{fig:psi}c.  Finally, when $\epsilon \to 0$,
$\max\{1/a,1/\epsilon\} \to \infty$, the third regime disappears, and
one retrieves two regimes for point-like targets.  

The behavior of the coefficients $B_{a,\epsilon}$ and
$\tilde{B}_{a,\epsilon}$ is shown on Fig. \ref{fig:Aeps}c,d.  As
expected from Eq. \eqref{eq:lambda_asympt_eps0a},
$\tilde{B}_{a,\epsilon}/(2\pi a)$ approaches $1$ as $\epsilon\to 0$
(point-like target).  Moreover, such normalized coefficient weakly
depends on $a$ (at least for small $a$).  The behavior of
$B_{a,\epsilon}$ is more complicated.  Given that $\L$ should converge
to the mean exit time for pure bulk diffusion, $\langle t_1\rangle_b$, as
$a\to 0$, one gets
\begin{equation}
\sum\limits_{n\geq 1} \frac{\psi_n^2}{\lambda_n^2} \simeq \frac{2 \pi D_2 \langle t_1\rangle_b}{a} .
\end{equation}
Since $\lambda_n \simeq A_\epsilon n^{-2}$ is independent of $a$, one
concludes that $B_{a,\epsilon} \sim 1/a$ as $a\to 0$.  For this
reason, we plot $a B_{a,\epsilon}$ on Fig. \ref{fig:Aeps}c.  For large
$\epsilon$, two curves for $a = 0.001$ and $a = 0.1$ do coincide, as
expected.  However, strong deviations emerge at small $\epsilon$.  In
fact, one needs to consider much smaller $a$ to get coinciding curves
over the whole considered range of $\epsilon$.  We conclude that the
reflection distance $a$ plays an important role, especially for small
targets.

Although the above asymptotic regimes for $\lambda_n$ and $\psi_n^2$
remain conjectural, we will investigate their consequences for the
asymptotic behavior of the mean exit time $\langle t_1 \rangle$.
Using the asymptotic relations for large $n$, we get
\begin{align}
\sum_{n\geq 1}{\frac{\psi_n^2}{\lambda_n^2\left(\lambda_n+\frac{D_1}{\lambda}\right)}} 
&\sim \int_1^{\infty}{\frac{B_{a,\epsilon} x^{-6}}{A_{\epsilon}^2 x^{-4} \left(A_{\epsilon} x^{-2}+\frac{D_1}{\lambda}\right)}dx} \nonumber \\
&= \frac{B_{a,\epsilon}}{A_{\epsilon}^3} \int_{1}^{\infty}{\frac{dx}{1+ x^2\frac{D_1}{\lambda A_{\epsilon}}}}  
 \sim \frac{B_{a,\epsilon}}{A_{\epsilon}^{5/2}}\frac{\pi}{2}\sqrt{\frac{\lambda}{D_1}}  .  \nonumber
\end{align}
Consequently, we get 
\begin{align} \label{eq:behavior_of_t1,epsilon>0}
\langle t_1 \rangle = \L - \frac{C_1}{\sqrt{\lambda}} + O\left( \frac{1}{\lambda} \right) ,
\end{align}
where 
\begin{equation}  \label{eq:C1}
C_1 = C_{a,\epsilon} ~ \frac{\sqrt{D_1}}{D_2} , \qquad C_{a,\epsilon} = (1-(1-a)^2) \frac{ B_{a,\epsilon} }{8A_{\epsilon}^{5/2}}.
\end{equation}

\begin{figure}
\begin{center}
	\includegraphics[width=0.49 \textwidth]{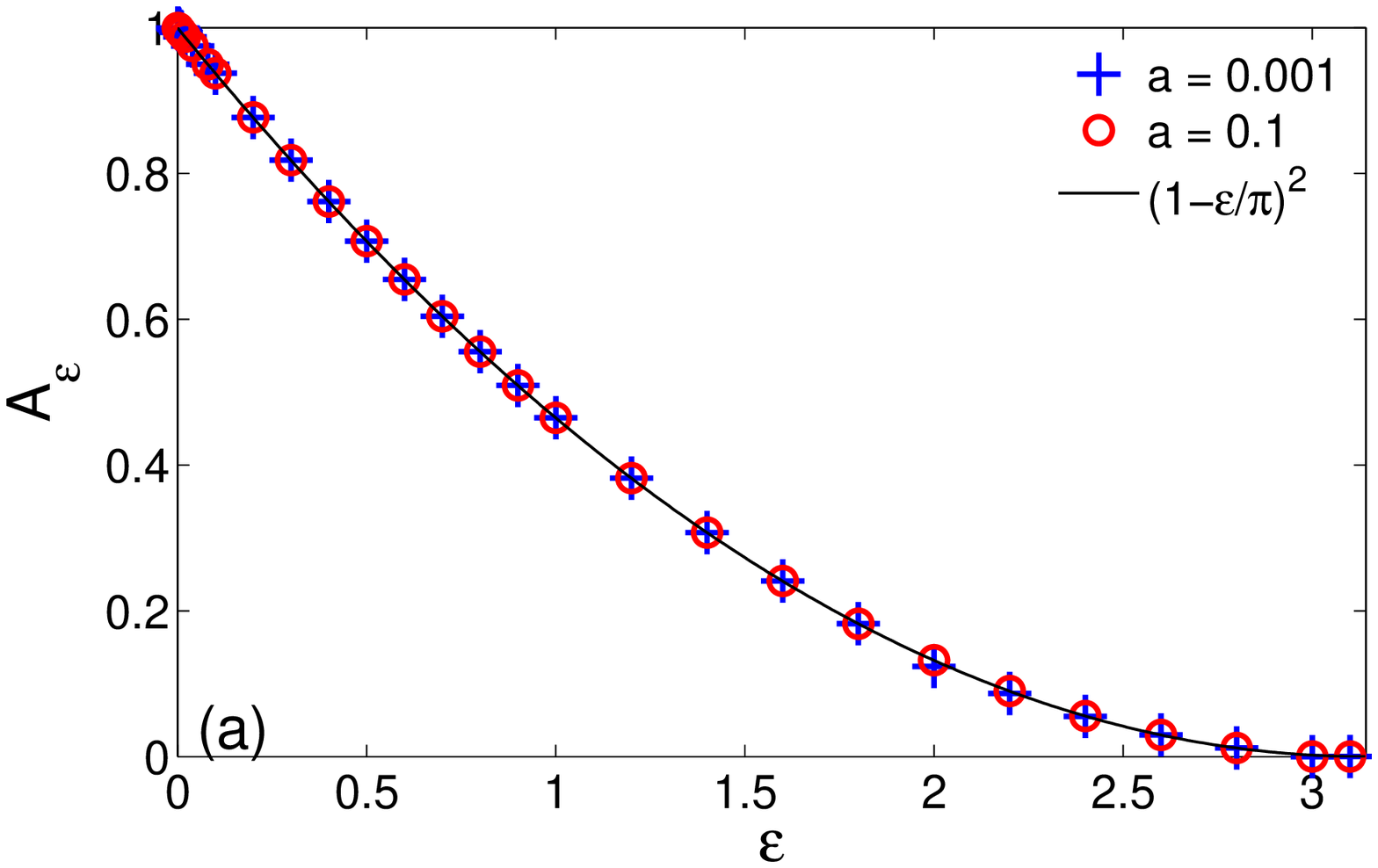} 
	\includegraphics[width=0.49 \textwidth]{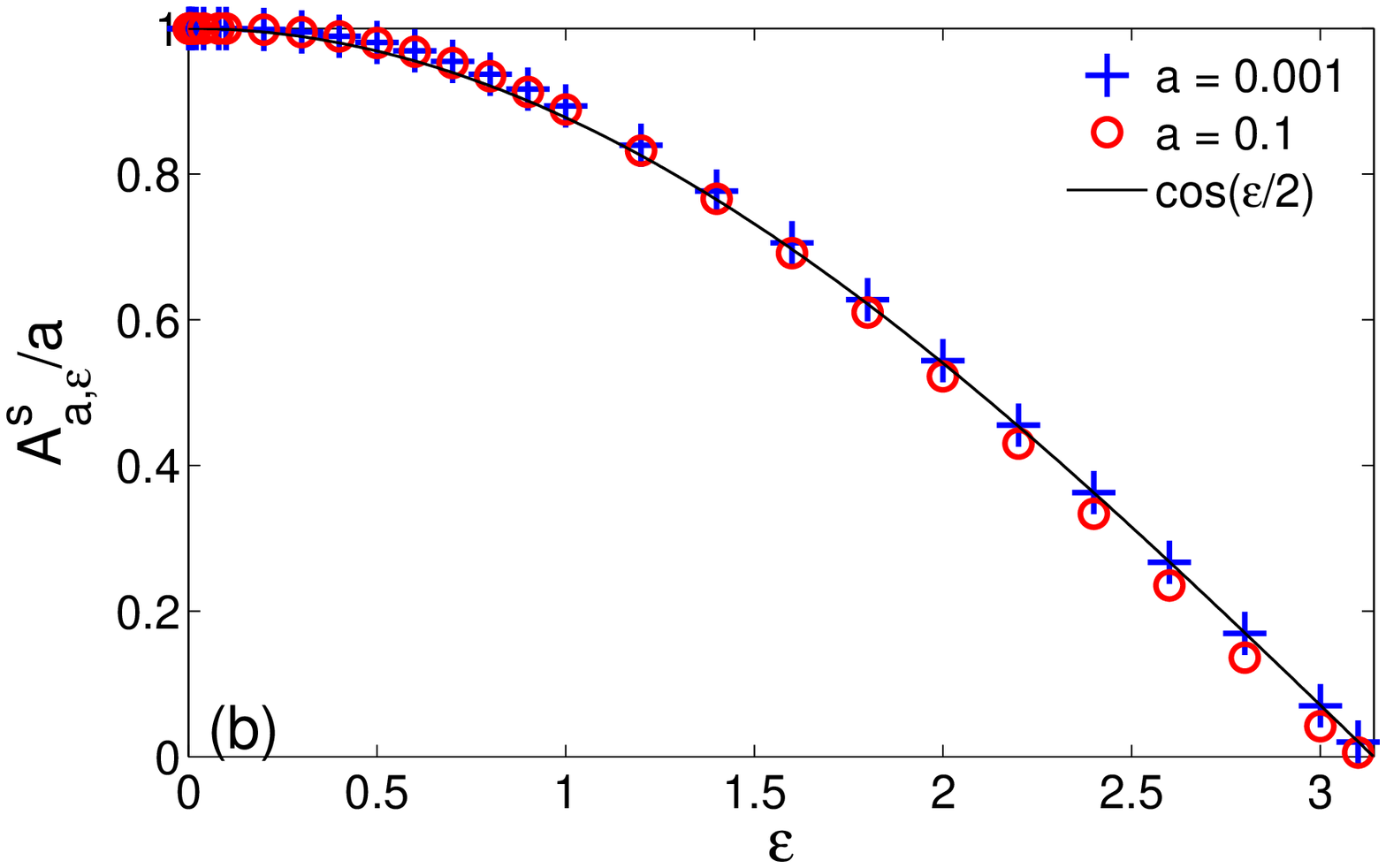} 
	\includegraphics[width=0.49 \textwidth]{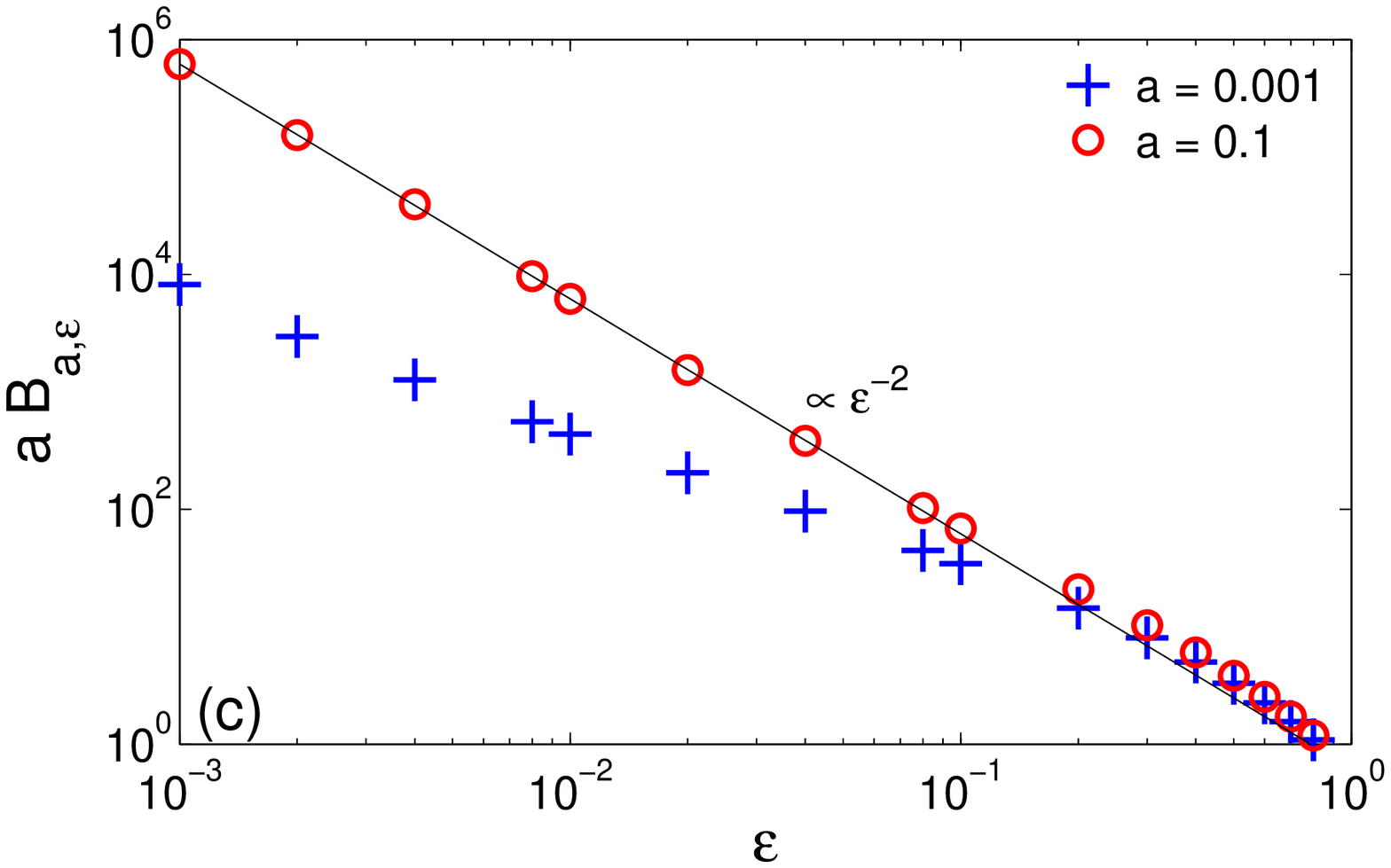} 
	\includegraphics[width=0.49 \textwidth]{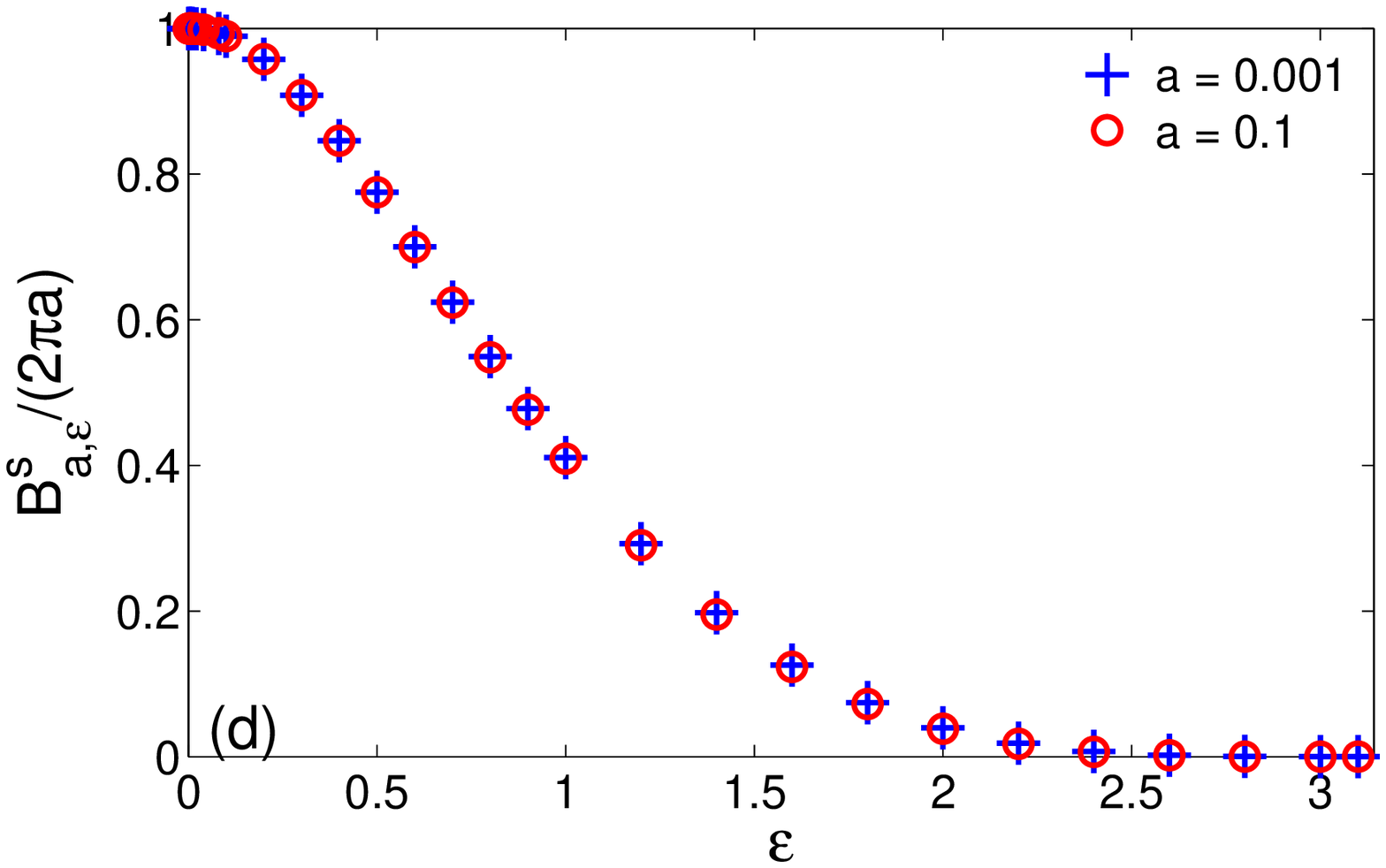} 
\end{center}
\caption{
Coefficients $A_\epsilon$, $\tilde{A}_{a,\epsilon}/a$, $a
B_{a,\epsilon}$, and $\tilde{B}_{a,\epsilon}/a$ from
Eqs. \eqref{eq:asympt1}, \eqref{eq:asympt2} versus $\epsilon$.  Two
curves for $a = 0.001$ and $a=0.1$ coincide that illustrates the
independence of $A_\epsilon$ and $\tilde{A}_\epsilon/a$ of $a$. }
\label{fig:Aeps}
\end{figure}

For an accurate numerical computation of $\langle t_1 \rangle$, we
consider the behavior of partial sums $f(N) = \sum_{n=1}^{N}
\frac{\psi_n^2}{\lambda_n\left(\frac{D_1}{\lambda}+\lambda_n
\right)}$ (note that $\langle t_1 \rangle$ is obtained in the limit $N \to
\infty$ according to Eq. \eqref{eq:<t1>3}).  We checked that $f(N) \sim
f(\infty)+\frac{c}{N}$ for large $N$ (for a fixed $\lambda$).  In
practice, we used the fourth order polynomial fit of $f(N)$ versus
$1/N$ for $N$ from 1000 to 20000 to extrapolate the value $f(\infty)$.

Figure \ref{fig:t1} shows the mean exit time $\langle t_1\rangle$ as a
function of $\lambda$ for a small target ($\epsilon = 0.01$) and two
values of $a$: $0.01$ and $0.001$.  In both cases, the mean exit time
passes through a minimum at some intermediate desorption rate
$\lambda_c$ and then approaches the maximum as $\lambda\to\infty$.
One can clearly see that the optimal value $\lambda_c$, as well as the
height of the maximum at $\lambda \to
\infty$, depend on $a$.  Although both considered values $a = 0.001$
and $a = 0.01$ are small, the limiting mean exit time $\L$ changes
significantly.  The asymptotic relation
\eqref{eq:behavior_of_t1,epsilon>0} (shown by thin solid lines)
accurately captures the limiting behavior.

\begin{figure}
\begin{center}
	\includegraphics[width=0.49 \textwidth]{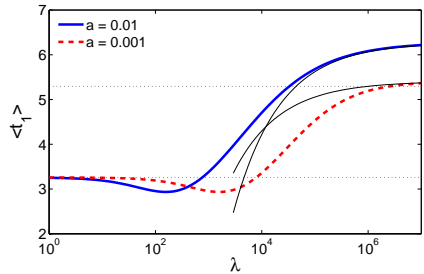}
\end{center}
\caption{
The mean exit time $\langle t_1\rangle$ as a function of $\lambda$ for
$\epsilon = 0.01$, $D_1 = D_2 = 1$, and two values of $a$: $0.01$
(solid line) and $0.001$ (dashed line).  Two horizontal lines indicate
the mean exit times for surface diffusion $\langle t_1\rangle_{\lambda
= 0} \approx 3.2586$ and for pure bulk diffusion $\langle
t_1\rangle_{\lambda = \infty} \approx 5.2929$ from
Eqs. \eqref{eq:t1_surface}, \eqref{eq:t1_bulk}.  Thin lines show the
asymptotic behavior \eqref{eq:behavior_of_t1,epsilon>0} where $\L$ and
$C_1$ are computed from Eqs. \eqref{eq:lim<t1>}, \eqref{eq:C1}. }
\label{fig:t1}
\end{figure}

When $\epsilon \to 0$, we use Eqs. \eqref{eq:asympt1} and
\eqref{eq:asympt2} to get
\begin{align}
\L &= \frac{1-(1-a)^2}{4\pi D_2}\sum_{n \ge 1}{\frac{\psi_n^2}{\lambda_n^2}} 
= \frac{1-(1-a)^2}{4\pi D_2}\left( \sum_{n=1}^{1/\epsilon} \frac{\tilde{B}_{a,\epsilon} n^{-3}}{\tilde{A}_{a,\epsilon}^2 n^{-2}}
+  O(\epsilon)  \right) \nonumber\\
\label{eq:T_log}
&= \frac{1-(1-a)^2}{4\pi D_2}\left( \frac{\tilde{B}_{a,\epsilon}}{\tilde{A}_{a,\epsilon}^2}\ln\frac{1}{\epsilon}+ O(\epsilon) \right) .
\end{align}
This logarithmic divergence is similar to the result from
Ref. \cite{Singer06a,Singer06b,Singer06c} which describes the mean
exit time for non-intermittent pure bulk diffusion (2D Brownian motion) in
the narrow escape limit ($\epsilon\to 0$).  This case formally
corresponds to the double limit $\lambda\to\infty$ and $a\to 0$.
Interestingly, the double limit can be taken separately: as
$\lambda\to\infty$, the limiting value $\L$ exists for any finite $a$.
In this way, we extended the classical escape problem for pure bulk
diffusion by including reflections at a finite distance $a$.  In
Appendix \ref{sec:bulk}, we provide the exact formula for the mean
exit time for pure bulk diffusion $\langle t_1\rangle_b$.  In the narrow
escape limit $\epsilon\to 0$, one gets $\langle t_1\rangle_b \approx
\frac{\ln(2/\epsilon)}{D_2} \bigl(1 + O(\epsilon)\bigr)$.

Figure \ref{fig:Tinf} shows the dependence of the limiting mean exit
time $\L$ on $a$ and $\epsilon$.  For fixed target size $\epsilon$,
$\L$ approaches the mean exit time for pure bulk diffusion $\langle
t_1\rangle_b$ as $a\to 0$ (Fig. \ref{fig:Tinf}a).  Note that a simple
formula for $\L$ in the special case $a = 1$ is derived in Appendix
\ref{sec:transportation}.  For fixed reflection distance $a$, $\L$
exhibits the logarithmic dependence on $\epsilon$ for $\epsilon
\gtrsim a$, in agreement with Eq. \eqref{eq:T_log}.  In the opposite
case $\epsilon \lesssim a$, the reflection distance $a$ alters this
behavior.

\begin{figure}
\begin{center}
	\includegraphics[width=0.49 \textwidth]{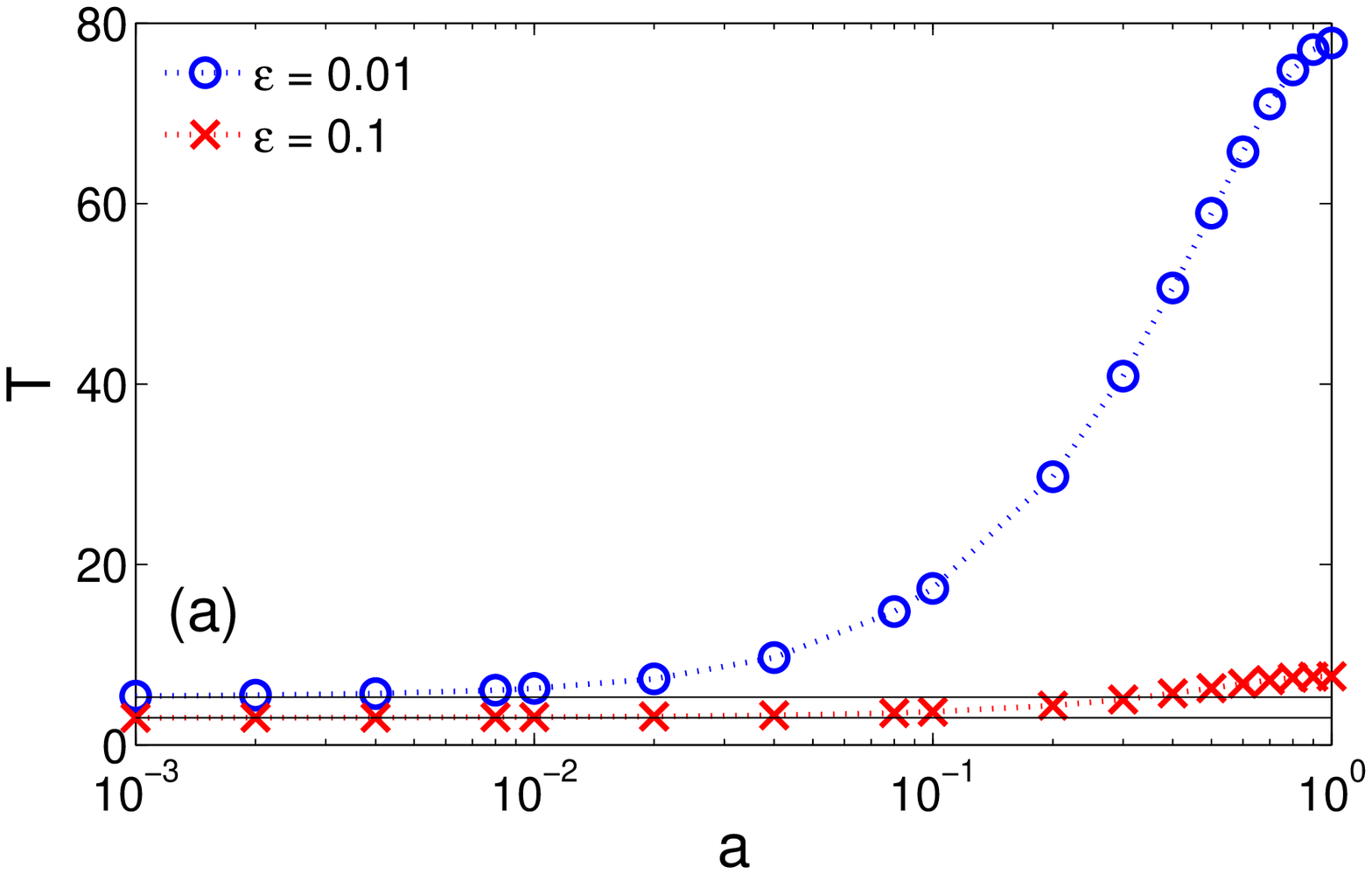}
	\includegraphics[width=0.49 \textwidth]{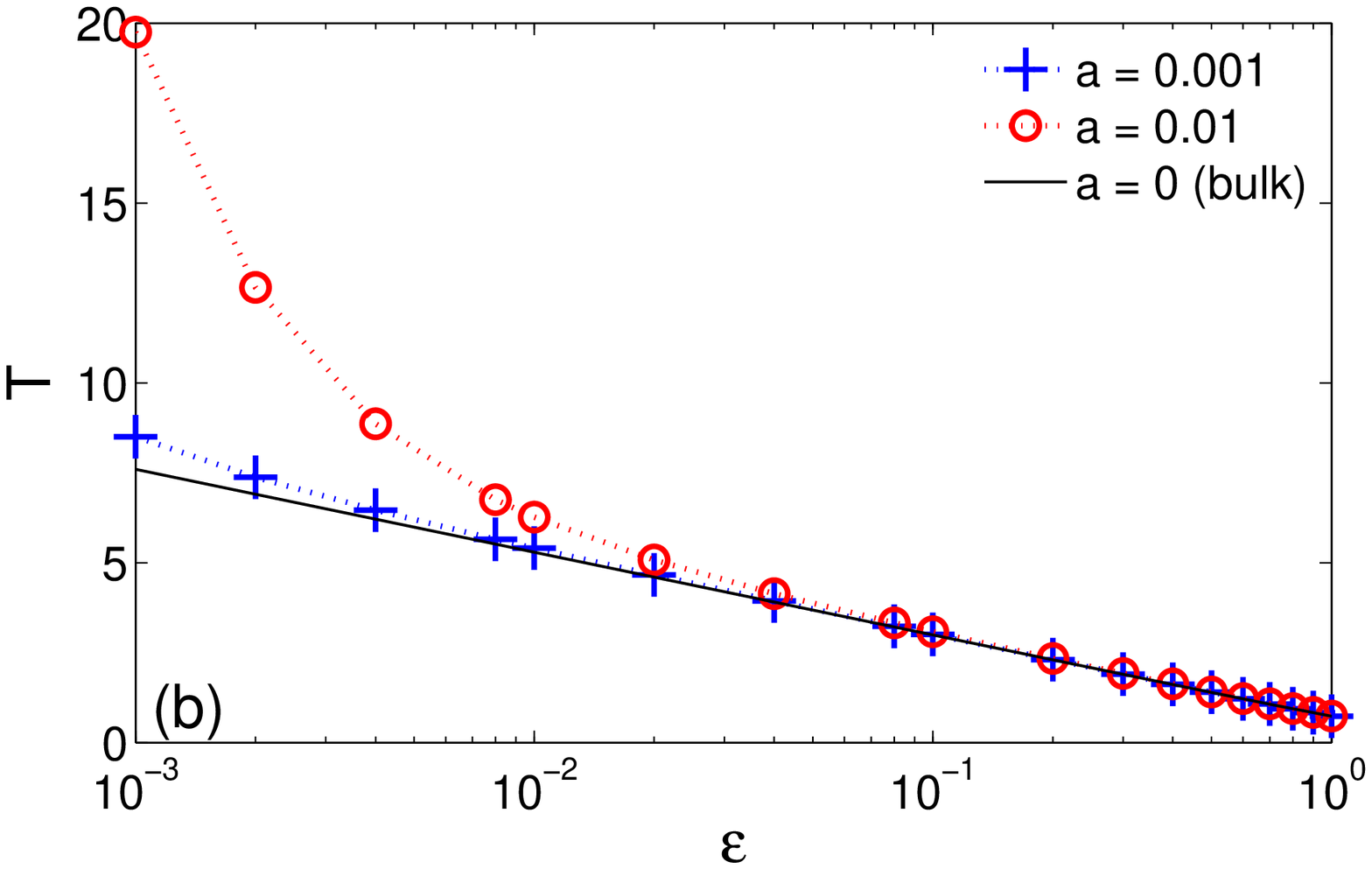}
\end{center}
\caption{
(a) The limiting mean exit time $\L$ as a function of $a$ for
$\epsilon = 0.01$ (circles) and $\epsilon = 0.1$ (crosses), $D_2 = 1$.
Two horizontal lines indicate the mean exit time for pure bulk diffusion
$\langle t_1\rangle_{\lambda = \infty}$: $5.2929$ and $2.9949$,
respectively.  As expected, $\L$ increases with $a$ because every
reflection from the boundary by distance $a$ requires additional time
to return (given that the target is located at the boundary).  In the
limit $a = 1$, $\L$ becomes $78.2898$ for $\epsilon = 0.01$ and
$7.6040$ for $\epsilon = 0.1$ according to Eq. \eqref{eq:T_a=1}.  (b)
$\L$ as a function of $\epsilon$ for $a = 0.001$ (pluses) and $a =
0.01$ (circles).  Solid line shows the mean exit time for pure bulk
diffusion from Eq. \eqref{eq:t1_bulk}.  As expected, the limiting time
diverges logarithmically with $\epsilon$ when $\epsilon \gtrsim a$.
In the opposite case $\epsilon \lesssim a$, the reflection distance
$a$ alters this behavior.}
\label{fig:Tinf}
\end{figure}

\section{Diagonal approximation}

In \cite{Benichou11}, an explicit diagonal approximation to the mean
exit time has been proposed and checked numerically to be accurate.
In our notations, this approximation consists in removing all
nondiagonal elements of the matrix $\VV\TT\VV$ from Eq.
\eqref{eq:VTV} that represents the operator $V\tilde{T}V$ in the
cosine basis.  Under this approximation, one gets
\begin{equation}
\lambda_n \approx \frac{1-(1-a)^n}{n^2} ~ \frac{\pi - \epsilon + \frac{\sin 2n\epsilon}{2n}}{\pi},  
\qquad e_n \approx \sqrt{\frac{2}{\pi}} \cos(n\theta) ,
\end{equation}
from which
\begin{equation}
\label{eq:psin_approx}
\psi_n^2 \approx \frac{2(1-(1-a)^n)}{\pi n^4}\biggl((\pi-\epsilon)\cos n\epsilon + \frac{\sin n\epsilon}{n}\biggr)^2 .
\end{equation}
Substituting these approximations into Eq. \eqref{eq:<t1>2}, we
retrieve the diagonal approximation from \cite{Benichou11}:
\begin{equation}
\label{eq:t1_approx}
\begin{split}
\langle t_1\rangle & \approx \frac{1}{\pi D_1}\left(1 + \lambda \frac{1-(1-a)^2}{4D_2}\right)\biggl[\frac{(\pi-\epsilon)^3}{3} \\
& - \frac{2\lambda}{\pi D_1} \sum\limits_{n\geq 1} \frac{1-(1-a)^n}{n^2} 
\frac{\bigl((\pi-\epsilon)\cos n\epsilon + \frac{\sin n\epsilon}{n}\bigr)^2}{n^2 + \frac{\lambda}{\pi D_1} (1-(1-a)^n)
\bigl(\pi - \epsilon + \frac{\sin 2n\epsilon}{2n}\bigr)} \biggr]. \\
\end{split}
\end{equation}

The diagonal approximation has been shown numerically to be very
accurate for moderate $\lambda$ and arbitrary $\epsilon$ (see
\cite{Benichou11}).  Although the matrix $\VV \TT\VV$ has indeed the
dominant diagonal, the remarkable quality of this approximation
remained puzzling.  At the same time, the approximation becomes
inappropriate in the limit of large $\lambda$.  In order to get a
finite limit of the mean exit time for $\epsilon > 0$ as
$\lambda\to\infty$, the constant term $(\pi-\epsilon)^3/3$ has to be
compensated.  In other words, writing Eq. \eqref{eq:t1_approx} as
\begin{equation}
\label{eq:t1_approx2}
\begin{split}
\langle t_1\rangle & \approx \frac{1}{\pi D_1}\left(1 + \lambda \frac{1-(1-a)^2}{4D_2}\right)\biggl\{\biggl[\frac{(\pi-\epsilon)^3}{3}
- 2 \sum\limits_{n\geq 1} \frac{\bigl((\pi-\epsilon)\cos n\epsilon + \frac{\sin n\epsilon}{n}\bigr)^2}{\pi - \epsilon + \frac{\sin 2n\epsilon}{2n}}
\frac{1}{n^2} \biggr] \\
& + 2 \sum\limits_{n\geq 1} \frac{\bigl((\pi-\epsilon)\cos n\epsilon + \frac{\sin n\epsilon}{n}\bigr)^2}{\pi - \epsilon + \frac{\sin 2n\epsilon}{2n}}
\frac{1}{n^2 + \frac{\lambda}{\pi D_1} (1-(1-a)^n)
\bigl(\pi - \epsilon + \frac{\sin 2n\epsilon}{2n}\bigr)} \biggr\}, \\
\end{split}
\end{equation}
the constant term in square brackets should be canceled.  This is true
only for $\epsilon = 0$ and $\epsilon = \pi$.  As a consequence, the
diagonal approximation \eqref{eq:t1_approx} (or
\eqref{eq:t1_approx2}) implies the divergence of the mean exit time as
$\lambda\to\infty$, in contrast to the expected convergence to a
finite limit for $\epsilon > 0$.

This discrepancy can be interpreted in terms of the asymptotic
behavior of the spectral weights $\psi_n^2$.  In fact, the approximate
relation \eqref{eq:psin_approx} suggests $\psi_n^2 \propto n^{-4}$, in
contrast to the observed $n^{-6}$ decay.  In other words, the diagonal
approximation operates over an intermediate asymptotic regime, while
the ultimate $n^{-6}$ asymptotics is missed.  This observation
explains why the diagonal approximation accurately captures the
behavior of the mean exit time $\langle t_1\rangle$ over small and
moderate $\lambda$ but fails in the limit $\lambda\to\infty$ (see
Fig. \ref{fig:t1_eps}b).

\section*{Conclusion}

In summary, we presented a spectral approach for computing the mean
exit time $\langle t_1\rangle$ from the unit disk by surface-mediated
diffusion with two alternating phases of bulk and surface motion.  The
derived spectral representation is particularly suitable for the
asymptotic analysis of the mean exit time in the limit of large
desorption rates $\lambda$.  For a point-like target ($\epsilon = 0$),
we proved the asymptotic increase $\langle t_1\rangle \propto
\sqrt{\lambda}$ and provided lower and upper bounds.  For extended
targets ($\epsilon > 0$), we investigated the asymptotic approach of
the mean exit time to a finite limit and proved an increase of
$\langle t_1\rangle$ at large $\lambda$ (the first correction term
being $\propto \lambda^{-1/2}$).  We revealed different asymptotic
behaviors of the governing eigenvalues $\lambda_n$ and spectral
weights $\psi_n^2$ depending on the target size $\epsilon$ and the
reflection distance $a$.  Finally, we discussed the optimality of
surface-mediated diffusion and potential limitations of the diagonal
approximation for the mean exit time.


\section{Appendices}
\begin{appendices}
\section{Asymptotic behavior of the eigenvalues of the operator $V\tilde{T}V$}
\label{sec:asympt}

\begin{theorem}
\label{thm:behavior1a}
Let $\lambda_n$ be the eigenvalues of the operator $V\tilde{T}V$,
where the operators $\tilde{T}$ and $V$ are defined in
Eqs. \eqref{eq:eq_T_tilde} and \eqref{eq:operator_V_A}.  We have
\begin{equation}
\lambda_n \sim A_{\epsilon} n^{-2} ,
\end{equation}
where $A_{\epsilon}$ depends only on $\epsilon$.
\end{theorem}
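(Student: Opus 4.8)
The plan is to realize $V\tilde T V$ as a very small perturbation of $\tilde T$ — self-adjoint, and trace-class with super-polynomially (in fact exponentially) decaying singular values — and then to transport the eigenvalue asymptotics of $\tilde T$ to $V\tilde T V$ via the Weyl--Ky Fan inequalities. Recall from the Sturm--Liouville description above that $\tilde T$ is a nonnegative compact operator depending on $\epsilon$ only, whose \emph{positive} eigenvalues, enumerated decreasingly, are $\nu_{n-1} = (1-\epsilon/\pi)^2/(n-1/2)^2$, so $\mu_n(\tilde T) \sim (1-\epsilon/\pi)^2\, n^{-2}$; here and below $\mu_n(A)$ denotes the $n$-th largest eigenvalue (counted with multiplicity) of a nonnegative compact self-adjoint $A$, and we use that $\tilde T$ acts as $-T$ on $L^2([0,\pi-\epsilon])$ and vanishes on its orthogonal complement $L^2([\pi-\epsilon,\pi])$, since $T$ is injective. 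If the perturbation argument succeeds, we obtain $\lambda_n \sim A_\epsilon n^{-2}$ with $A_\epsilon = (1-\epsilon/\pi)^2$, the conjectural value of \eqref{eq:Aepsilon}.

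\emph{Step 1 (the perturbation is negligible).} Write $V = I - S$, where $S$ is the cosine-Fourier multiplier with symbol $s(n) = 1 - \sqrt{1-(1-a)^n}$ for $n\geq 1$ and $s(0)=1$. Since $0<1-a<1$ and $1-\sqrt{1-x}\leq x$ on $[0,1]$, the symbol is decreasing with $s(n)\leq (1-a)^n$ for $n\geq 1$; hence $S=S^*$ and $s_n(S) = O\big((1-a)^n\big)$ (the isolated value $s(0)=1$ merely shifts the enumeration by one and is harmless). Expanding,
\[
V\tilde T V = \tilde T + P, \qquad P = -S\tilde T - \tilde T S + S\tilde T S = P^*.
\]
Using $s_n(XY)\leq s_n(X)\|Y\|$, $s_n(XY)\leq \|X\|s_n(Y)$, the boundedness of $\tilde T$, and the singular-value inequality $s_{3k-2}(X+Y+Z)\leq s_k(X)+s_k(Y)+s_k(Z)$, we get $s_n(P) \lesssim s_{\lceil n/3\rceil}(S) \lesssim (1-a)^{n/3}$, so $|\mu_n(\pm P)|\leq s_n(P)$ decays faster than any power of $1/n$.

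\emph{Step 2 (transfer of asymptotics).} Both $\tilde T$ and $C:=V\tilde T V$ are nonnegative compact self-adjoint with infinitely many positive eigenvalues, and $\lambda_n$ coincides with $\mu_n(C)$ up to an index shift. The Ky Fan inequality $\mu_{i+j-1}(A+B)\leq \mu_i(A)+\mu_j(B)$, applied with $(A,B)=(\tilde T,P)$ and with $(A,B)=(C,-P)$, gives for every $1\leq m\leq n$
\[
\mu_n(C) \leq \mu_{n-m+1}(\tilde T) + s_m(P), \qquad \mu_n(\tilde T) \leq \mu_{n-m+1}(C) + s_m(P).
\]
Taking $m = \lceil \sqrt n\, \rceil$ we have $m = o(n)$, hence $\mu_{n-m+1}(\tilde T) = (1-\epsilon/\pi)^2 n^{-2}(1+o(1))$, while $s_m(P) \lesssim (1-a)^{\sqrt n/3} = o(n^{-2})$. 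The first inequality then yields $\mu_n(C) \leq (1-\epsilon/\pi)^2 n^{-2}(1+o(1))$; re-indexing the second with $n' = n-m+1$ gives $\mu_{n'}(C) \geq (1-\epsilon/\pi)^2 {n'}^{-2}(1+o(1))$. Therefore $\lambda_n \sim (1-\epsilon/\pi)^2 n^{-2}$, which proves the theorem and identifies $A_\epsilon$.

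\emph{Main obstacle.} There is no genuine analytic difficulty: $V$ is an explicit, exponentially small multiplier perturbation of the identity, which makes Step 1 elementary, and Step 2 is the standard principle that a perturbation whose singular values decay faster than any polynomial cannot change a power-law eigenvalue asymptotics. The only points needing care are bookkeeping: the precise correspondence between the paper's $\lambda_n$ and the decreasing enumeration $\mu_n(C)$, and the verification that the $\nu_n$ form the \emph{complete} list of positive eigenvalues of $\tilde T$ on $L^2([0,\pi])$ (which follows from the injectivity of $T$, giving $\ker\tilde T = L^2([\pi-\epsilon,\pi])$). If one prefers not to pin down the constant, it suffices to observe that $\tilde T$, and hence all of its eigenvalues, depends on $\epsilon$ alone, which already delivers the stated form $\lambda_n \sim A_\epsilon n^{-2}$.
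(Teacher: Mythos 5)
Your proof is correct and follows essentially the same route as the paper's: write $V = I - (\text{exponentially small cosine multiplier})$, expand $V\tilde T V = \tilde T + P$, control $P$ by a finite-rank-plus-small-norm (Weyl/Ky Fan type) argument, and transfer the $n^{-2}$ law from the explicit Sturm--Liouville spectrum of $\tilde T$ by an index shift $m=o(n)$ that kills the exponentially small error --- precisely the content of the paper's Lemmas \ref{lem:lemma1}--\ref{lem:lemma3}. The only differences are refinements rather than a new strategy: you keep the exact symbol $1-\sqrt{1-(1-a)^n}$ (and the $n=0$ mode) instead of the paper's small-$a$ replacement of $V$ by $I-\tfrac12(1-a)^n$, you quote standard $s$-number inequalities where the paper reproves them, and you note that the argument actually identifies $A_\epsilon=(1-\epsilon/\pi)^2$, i.e.\ it establishes what the main text presents only as the conjectural relation \eqref{eq:Aepsilon}.
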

\begin{proof}
In order to prove this statement, we first investigate the following
problem: 

Let $A$ and $B$ are two compact, positive, self-adjoint operators.  We
assume that the eigenvalues of the operator $A$ are ordered in a
decreasing sequence: $\lambda_1(A)\geq \lambda_2(A)\geq \ldots \geq
\lambda_n(A)\geq \ldots \geq 0$.  We recall the variational principle
as following
\begin{theorem}
\begin{equation}
\lambda_n(A)=\max_{F} \min_{O\neq x\in A}\frac{\langle Ax,x\rangle}{\langle x,x\rangle} ;\text{(}\text{where } 
F\text{ is a subspaces of }L^2[0,\pi];\text{ } \dim F=n \text{)}.
\end{equation}
The $\max$ is taken over $F$, the subspace associated with the first
$n$ eigenvectors of $A$.
\end{theorem}
We state two following lemmas which will be needed to prove the
proposition \ref{thm:behavior1a}.

\begin{lemma}
\label{lem:lemma1}
We make the assumption that $\lambda_n(A)$, $\lambda_n(A+B)$ are the
$n^{th}$ eigenvalues of the operators $A$ and $A+B$.  Then, we have
\begin{equation}
\lambda_n(A)-\|B\|\leq \lambda_n(A+B)\leq \lambda_n(A)+\|B\|,
\end{equation}
where $\|.\|$ define the norm of an operator in $L^2[0,\pi]$ space.
\end{lemma}
\begin{proof}
Let $F$ be the subspace of $L^2[0,\pi]$ associated with the first
$n$ eigenvectors of $A$.  For all $x\in F$, we have
\begin{align*}
\langle(A+B)x,x\rangle = \langle Ax,x\rangle + \langle Bx,x\rangle.
\end{align*}
According to the variational principle, we have
\begin{align}\label{eq:apply_Variational_Principle_1_A}
\langle Ax,x\rangle\geq \min_{x\in A}\frac{\langle Ax,x\rangle}{\langle x,x\rangle}\|x\|^2=\lambda_n(A)\|x\|^2.
\end{align}
Besides, we have 
\begin{align}\label{eq:apply_Variational_Principle_2_A}
| \langle Bx,x\rangle | \leq \|B\|\|x\|^2.
\end{align}
It follows from Eqs. \eqref{eq:apply_Variational_Principle_1_A} and
\eqref{eq:apply_Variational_Principle_2_A} that
\begin{align*}
\langle (A+B)x,x\rangle\geq \lambda_n(A)\|x\|^2-\|B\|\|x\|^2.
\end{align*}
This gives
\begin{align*}
\min_{0\neq x\in F}\frac{\langle (A+B)x,x\rangle}{\|x\|^2}\geq \lambda_n(A)-\|B\|.
\end{align*}
Again, according to the variational principle, we thus get
\begin{align*}
\lambda_n(A+B)\geq\lambda_n(A)-\|B\|.
\end{align*} 
In the same manner, if we take $F$ be associated to the first $n$
eigenvectors of $A+B$, we can get
\begin{align*}
\lambda_n(A+B)\leq \lambda_n(A)+\|B\|,
\end{align*}
and the lemma \ref{lem:lemma1} follows.
\end{proof}

\begin{lemma}
\label{lem:lemma2}
With the notations used in lemma \ref{lem:lemma1}, if $\rank(B) <
\infty$, then
\begin{equation}
\label{eq:inequality_A+B}
\lambda_{n+\rank(B)}(A)\leq \lambda_n(A+B)\leq \lambda_{n-\rank(B)}(A).
\end{equation}
\end{lemma}
\begin{proof}
We call $F$ the subspace of $L^2[0,\pi]$ associated with the first
$n-\rank(B)$ eigenvectors of $A$.

By the variational principle, we have 
\begin{equation*}
\forall x\in F\cap \ker(B), \|x\|=1: \langle (A+B)x,x\rangle = \langle Ax,x\rangle\geq \lambda_{n+\rank(B)}(A).
\end{equation*}
Consequently,
\begin{equation*}
\min_{x\in F\cap \ker(B);\\ \|x\|=1}\langle (A+B)x,x\rangle\geq \lambda_{n+\rank(B)}(A).
\end{equation*}
Since 
\begin{equation*}
\dim(F\cap \ker(B))=\dim F -\dim B(F)\geq n,
\end{equation*}
we have 
\begin{equation*}
\lambda_n(A+B)\geq \min_{0\neq x\in \ker(B) \cap F;\\ \|x\|=1}\langle (A+B)x,x\rangle.
\end{equation*}
So, we conclude that
\begin{equation}\label{eq:lemma_2_A}
\lambda_n(A+B)\geq \lambda_{n+\rank(B)}(A).
\end{equation}
The second inequality in \eqref{eq:inequality_A+B} of this lemma is
obtained when we put $A'=A+B$, $B'=-B$, $n'=n-\rank(B)$ and apply the
conclusion \eqref{eq:lemma_2_A} for $A'$, $B'$ and $n'$ instead of
$A$, $B$ and $n$.
\end{proof}

We now call $\pi_N$ be the orthogonal projection on the first $N$
eigenvectors of $B$.

By the property of an orthogonal projection, we can rewrite
\begin{equation*}
B=\pi_N B\pi_N+(I-\pi_N)B(I-\pi_N),
\end{equation*}
then
\begin{equation*}
A+B=A+\pi_N B\pi_N+(I-\pi_N)B(I-\pi_N).
\end{equation*}
We note that 
\begin{align*}
\rank(\pi_N B\pi_N)=N,
\end{align*} 
and
\begin{align}\label{eq:norm_B_is_dominated}
\|(I-\pi_N)B(I-\pi_N)\|\leq \lambda_N(B).
\end{align}
By applying lemma \ref{lem:lemma1}, we get $\forall~ n\geq N$,
\begin{align*}
& \lambda_n(A+\pi_N B\pi_N)-\|(I-\pi_N)B(I-\pi_N)\|\leq\lambda_n(A+B) , \\
& \lambda_n(A+B) \leq\lambda_n(A+\pi_N B\pi_N)+\|(I-\pi_N)B(I-\pi_N)\|. 
\end{align*}
From \eqref{eq:norm_B_is_dominated}, we obtain
\begin{align*}
\lambda_n(A+\pi_N B\pi_N)-\lambda_N(B)\leq\lambda_n(A+B)\leq\lambda_n(A+\pi_N B\pi_N)+\lambda_N(B).
\end{align*}
According to lemma \ref{lem:lemma2},
\begin{align*}
\lambda_{n+N}(A)-\lambda_N(B)\leq\lambda_n(A+B)\leq\lambda_{n-N}(A)+\lambda_N(B).
\end{align*}
We can thus conclude that
\begin{align}\label{eq:eq_eigenvalues}
\forall N, \text{ }\forall n\geq N: \text{ } \lambda_{n+N}(A)-\lambda_N(B)\leq\lambda_n(A+B)\leq\lambda_{n-N}(A)+\lambda_N(B).
\end{align}

\begin{lemma}
\label{lem:lemma3}
Let $\{\lambda_n(A)\}$ and $\{\lambda_n(B)\}$ are the eigenvalues of
two self-adjoint operators $A$ and $B$.  If $\lambda_n(A)\sim c
n^{-s}$ and $\lambda_N(B)=\rho^N$ where $A$ and $\rho$ are some
constants, $0 \leq \rho \leq 1$, then
\begin{align*}
\lambda_n(A+B)\sim c n^{-s}.
\end{align*} 
\end{lemma}
\begin{proof}
Indeed, by applying \eqref{eq:eq_eigenvalues}, if we choose
$N(n)=n^{\delta}$, with $\delta < 1$, then we obtain from
\eqref{eq:eq_eigenvalues} that
\begin{align*}
c (n+n^{\delta})^{-s}(1+o(1))-\rho^{n^{\delta}}\leq \lambda_n(A+B)\leq c(n-n^{\delta})^{-s}(1+o(1))+\rho^{n^{\delta}},
\end{align*}
Let $n \rightarrow \infty$, we have
\begin{align*}
c n^{-s}(1+o(1))\leq \lambda_n(A+B)\leq c n^{-s}(1+o(1)).
\end{align*}
Hence, we conclude that $\lambda_n(A+B)\sim c n^{-s}$.
\end{proof}

We now turn back to prove the Proposition \ref{thm:behavior1a}.

We consider the eigenpairs of the operator $V\tilde{T}V$, where
$\tilde{T}$ is defined in Eq. \eqref{eq:eq_T_tilde}, and $V$ is
defined by
\begin{align*}
V(\cos nx)=\sqrt{1-(1-a)^n}\cos nx\simeq \left(1-\frac{1}{2}(1-a)^n\right)\cos nx
\end{align*}
(the last approximate equality is valid for $a\ll 1$).  We note that
$V$ and $\tilde{T}$ are self-adjoint operators.  We rewrite $V$ as
$I-R$ where
\begin{align*}
R(\cos nx)=\frac{1}{2}(1-a)^n\cos nx,
\end{align*} 
then, 
\begin{align*}
V\tilde{T}V=(I-R)\tilde{T}(I-R)=\tilde{T}-R\tilde{T}-\tilde{T}R-R\tilde{T}R.
\end{align*}
Let us denote by $K_N$ the image of the orthonormal projection on the
first $N^{\rm th}$ eigenvectors of $R$ and $R_N$ the image of the
orthonormal projection on the rest eigenvectors of $R$.  By
definition, $R=K_N+R_N$.  Then,
\begin{align}\label{eq:analys_VTV}
V\tilde{T}V &=\tilde{T}-(K_N+R_N)\tilde{T}-\tilde{T}(K_N+R_N)+(K_N+R_N)\tilde{T}(K_N+R_N)\nonumber\\
&=\tilde{T}\underbrace{-K_N\tilde{T}-\tilde{T}K_N+K_N\tilde{T}K_N+K_N\tilde{T}R_N+R_N\tilde{T}K_N}_{ 
\text{these operators have the finite rank, which equal to }N}-R_N\tilde{T}-\tilde{T}R_N+R_N\tilde{T}R_N.
\end{align}
We note that $\rank(K_N)=N$ and in formula \eqref{eq:analys_VTV},
whenever there is a $K_N$, we have an operator of rank $N$.  Moreover,
$-R_N\tilde{T}-\tilde{T}R_N+R_N\tilde{T}R_N$ has the norm dominated by
the $N^{\rm th}$ eigenvalue of $R$:
\begin{align*}
\|-R_N\tilde{T}-\tilde{T}R_N+R_N\tilde{T}R_N\|&\leq \|R_N \tilde{T}\|+\|\tilde{T} R_N\|+\|R_N\tilde{T}R_N\|\\
&\leq c \|R_N\| \leq c \lambda_N(R)=\frac{c}{2}(1-a)^N.
\end{align*}
Since the operator $\tilde{T}$ is the solution of the Sturm-Liouville
problem, $\lambda_n(\tilde{T})\sim A_{\epsilon} n^{-2}$.  Hence, refer
to lemma \ref{lem:lemma3}, we get that
\begin{align*}
\lambda_n(V\tilde{T}V)\sim A_{\epsilon} n^{-2}.
\end{align*}
\end{proof}


\section{Numerical computation of spectral characteristics}
\label{sec:numerics}

We briefly present a numerical algorithm to compute the spectral
characteristics $\lambda_n$ and $\psi_n$.  In order to compute the
eigenvalues $\lambda_n$ and the eigenvectors $e_n$ of the operator
$V\tilde{T}V$, we get an explicit representation of this operator in
the basis $\cos n\theta$.  First, we find
\begin{equation}
\tilde{T}(\cos n\theta)=
\begin{cases}
	\frac{\cos n\theta-\cos n(\pi-\epsilon)}{n^2}, & 0 \le \theta < \pi-\epsilon,\\
	0, & \pi-\epsilon \le \theta \le \pi,
\end{cases}
\end{equation}
and
\begin{equation}
\tilde{T}(1)(\theta)=
\begin{cases}
	\frac{(\pi-\epsilon)^2-\theta^2}{2}, & 0 \le \theta < \pi-\epsilon,\\
	0, & \pi-\epsilon \le \theta \le \pi,
\end{cases}
\end{equation}
from which the expansion of $\tilde{T}(\cos n\theta)$ ($n\geq 0$) in
the basis $\{\cos n\theta\}$ of $L_{\text{even}}^{2}{[0,\pi]}$ is
\begin{equation}
\tilde{T}(\cos n\theta) = \sum_{m\geq 0} \TT_{mn}\cos m\theta    \quad n\geq 0 ,
\end{equation}
where the coefficients $\TT_{mn}$ are defined by
\begin{align}
\frac{\pi}{2} \TT_{mn}&=
	\begin{cases}
	\langle \tilde{T}(\cos n\theta),\cos m\theta\rangle, & \text{ if } m\geq 1\\
	\frac{1}{2}\langle \tilde{T}(\cos n\theta),1\rangle, & \text{ if } m=0
		\end{cases}\nonumber\\
&=
	\begin{cases}
	\frac{1}{2}(1-\delta_{mn}) \frac{(-1)^{m+n+1}}{mn} \biggl[\frac{\sin (m-n)\epsilon}{m-n} - \frac{\sin (m+n)\epsilon}{m+n}\biggr]
        + \frac{1}{2} \delta_{mn}\frac{1}{n^2}\left(\pi-\epsilon + \frac{\sin 2n \epsilon}{2n}\right) \\
	\hspace{80mm} (m\geq 1;~ n\geq 1), \\
	\frac{1}{2} \frac{(-1)^{n+1}}{n^2}\left[(\pi-\epsilon)\cos n\epsilon + \frac{\sin n\epsilon}{n}\right], \hspace{30mm} (m=0;~ n\geq 1),  \\
	\frac{(-1)^{m+1}}{m^2}\left[(\pi-\epsilon)\cos m\epsilon + \frac{\sin m\epsilon}{m}\right], \hspace{28mm} (m\geq 1;~ n=0),  \\
	\frac{(\pi-\epsilon)^3}{6}, \hspace{69.5mm}  (m=0;~ n=0).
	\end{cases}
\end{align}
In turn, the operator $V$ has a diagonal representation:
\begin{align}
\VV_{mn}=
	\begin{cases}
		\sqrt{1-(1-a)^m} &  (m=n;~ m,n\geq 0), \\
		0 & (m\neq n;~ m,n\geq 0) . 
	\end{cases}
\end{align}
Combining these results, the operator $V\tilde{T}V$ is represented by
the infinite-dimensional matrix $\VV \TT \VV$ whose elements are
\begin{align}
\label{eq:VTV}
[\VV \TT \VV]_{m,n} & = \frac{1}{\pi} \sqrt{1-(1-a)^n} ~ \sqrt{1-(1-a)^m}\biggl\{
   \delta_{mn}\frac{1}{n^2}\left(\pi-\epsilon + \frac{\sin 2n \epsilon}{2n}\right)  \\
& - (1-\delta_{mn}) \frac{(-1)^{m+n}}{mn} 
\biggl[\frac{\sin (m-n)\epsilon}{m-n} - \frac{\sin (m+n)\epsilon}{m+n}\biggr] \biggr\} \quad (m\geq 1;~ n\geq 1),  \nonumber
\end{align}
and $[\VV \TT \VV]_{m,n} = 0$ if $m = 0$ or $n = 0$.  Solving the
eigenvalues $\{\lambda_n\}$ and the eigenvectors $\{e_n\}$ of the
operator $V\tilde{T}V$ is equivalent to finding the eigenpairs of the
associated matrix $\VV \TT \VV$.  Note that this matrix is symmetric.

The matrix $\VV \TT \VV$ is diagonalized in Matlab that finds the
eigenvalues $\lambda_n$ and the coefficients $v_{mn}$ determining the
orthonormal basis $\{e_n\}_{n \geq 0}$ as
\begin{equation} \label{eq:eq_of_e_n}
e_n(\theta) = \sqrt{\frac{1}{\pi}}~ v_{0n} + \sqrt{\frac{2}{\pi}} \sum_{m\geq 1} v_{mn} \cos m\theta.
\end{equation}
The spectral weights $\psi_n$ are then given as
\begin{align}
\psi_n = \langle \psi, e_n \rangle = \sqrt{\frac{2}{\pi}} \sum_{m\geq 1} v_{mn} \langle \psi, \cos m\theta\rangle ,
\end{align}
where
\begin{align*}
\langle \psi,\cos m\theta \rangle =
	\sqrt{1-(1-a)^m}~\frac{(-1)^{m+1}}{m^2}\left[(\pi-\epsilon)\cos m\epsilon + \frac{\sin m\epsilon}{m}\right],  
\end{align*}
and $\langle \psi,1 \rangle = \langle V\tilde{T}(1), 1 \rangle =
\langle \tilde{T}(1), V1 \rangle = 0$.

\section{Pure bulk diffusion phase}
\label{sec:bulk}

In \cite{Rupprecht14}, the mean exit time for pure bulk diffusion phase
($\lambda = \infty$, $a = 0$) was found to be
\begin{equation}
t_2(r,\theta) = \frac{1}{D_2} \left(\frac{1-r^2}{4} + \frac{\alpha_0}{2} + \sum\limits_{n=1}^\infty \alpha_n r^n \cos n\theta \right),
\end{equation}
where
\begin{align}
\alpha_0 & = - 2\ln [\sin (\epsilon/2)] , \\
\alpha_n & = \frac{(-1)^{n-1}}{2n} \bigl[P_n(\cos \epsilon) + P_{n-1}(\cos\epsilon)\bigr], 
\end{align}
where $P_n(z)$ are Legendre polynomials.  In the limit $r\to 1$, one
gets
\begin{equation}
t_1(\theta) = \begin{cases} \frac{1}{D_2} \left(\frac12 \alpha_0 + \sum\limits_{n=1}^\infty \alpha_n  \cos n\theta \right) 
\quad (0 < \theta < \pi - \epsilon),  \cr 
  0 , \hskip 39mm  (\pi - \epsilon \leq \theta \leq \pi ) .  \end{cases}
\end{equation}
The average of the uniformly distributed starting point $\theta$
yields
\begin{equation}
\label{eq:t1_bulk}
\langle t_1\rangle_b = \frac{1}{\pi D_2} \left(-(\pi-\epsilon)\ln [\sin(\epsilon/2)] + 
\sum\limits_{n=1}^\infty \frac{\sin n\epsilon}{2n^2} \bigl[P_n(\cos \epsilon) + P_{n-1}(\cos\epsilon)\bigr] \right) .
\end{equation}
For small $\epsilon$, the first term dominates yielding $\langle
t_1\rangle_b \simeq \frac{\ln(2/\epsilon)}{D_2} (1 + O(\epsilon))$.

Figure \ref{fig:t1_surfbulk} shows the mean exit times $\langle
t_1\rangle_{\lambda =0}$ and $\langle t_1\rangle_{\lambda = \infty}$
from Eqs. \eqref{eq:t1_surface} and \eqref{eq:t1_bulk} for surface
diffusion and pure bulk diffusion, as a function of $\epsilon$.

\begin{figure}
\begin{center}
	\includegraphics[width=0.49 \textwidth]{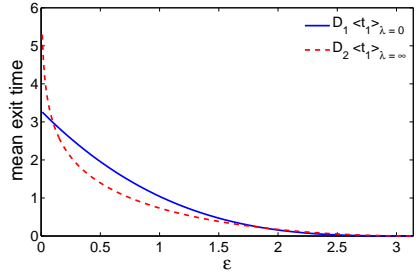}
\end{center}
\caption{
Mean exit times $\langle t_1\rangle_{\lambda =0}$ and $\langle
t_1\rangle_{\lambda = \infty}$ from Eqs. \eqref{eq:t1_surface} and
\eqref{eq:t1_bulk} for surface diffusion and pure bulk diffusion, as a
function of $\epsilon$.  These times are multiplied by the
corresponding diffusion coefficients $D_1$ and $D_2$.}
\label{fig:t1_surfbulk}
\end{figure}

\section{Transportation case ($a=1$)}
\label{sec:transportation}

As we earlier discussed, one typically considers small values of the
ejection distance $a$.  Nevertheless, the above results are applicable
to any value of $a$ from $0$ to $1$.  The so-called transportation
case $a = 1$ when the particle is reflected to the origin, was studied
by B\'enichou {\it et al.} \cite{Benichou11}.  In this case,
successive explorations between any two reflections are independent
that allows one to get much simpler formulas.  For instance, in the
limit $\lambda\to\infty$, the Laplace transformed probability density
of the exit time, ${\mathcal L}[P(t)](s)$ (with a uniformly chosen
initial point on the circle), has a simple expression: ${\mathcal
L}[P(t)](s) = q\bigl[1 -
(1-q)/I_0\bigl(\sqrt{s/D_2}\bigr)\bigr]^{-1}$, where $q =
\epsilon/\pi$, and $1/I_0(\sqrt{s/D_2})$ is the Laplace transformed
probability density for the first passage time to the circle when
started from the origin (with $I_0(z)$ being the modified Bessel
function of the first kind).  As a consequence, the mean exit is
simply
\begin{equation}  \label{eq:T_a=1}
\L_{a=1} = \frac{\pi - \epsilon}{4D_2 \epsilon} .
\end{equation}
This limit is clearly seen on Fig. \ref{fig:Tinf}.

\end{appendices}



\bibliographystyle{plain}
\bibliography{article}

\end{document}